\newcommand{\ket} [1] {\vert #1 \rangle}
\newcommand{\bra} [1] {\langle #1 \vert}
\newcommand{\proj}[1]{\ket{#1}\bra{#1}}
\def\be{\begin{equation}}
\def\en#1{\label{#1}\end{equation}}
\def\bar#1{\overline #1}
\newtheorem{thm}{Theorem}
\newtheorem{lem}[thm]{Lemma}
\newtheorem{coro}[thm]{Corollary}
\begin{document}

\title{Simulating boson sampling in lossy architectures}
\date{\today}

\author{Ra\'{u}l Garc\'{i}a-Patr\'{o}n}
\affiliation{Centre for Quantum Information and Communication, Ecole Polytechnique de Bruxelles, CP 165, Universit\'{e} Libre de Bruxelles, 1050 Brussels, Belgium}
\email{raulgarciapatron@gmail.com}
\orcid{0000-0003-1760-433X}
\author{Jelmer J. Renema}
\affiliation{Clarendon Laboratory, Department of Physics, University of Oxford, Oxford OX1 3PU, United Kingdom}
\affiliation{University of Twente, PO Box 217, 7500 AE Enschede, The Netherlands}
\author{Valery Shchesnovich}
\affiliation{Centro de Ci\^{e}ncias Naturais e Humanas, Universidade Federal do ABC, Santo Andr\'{e}, SP, 09210-170 Brazil.}

\maketitle

\begin{abstract}
Photon losses are among the strongest imperfections affecting multi-photon interference. Despite their importance, little is known about their effect on boson sampling experiments. In this work we show that using classical computers, one can efficiently simulate multi-photon interference in all architectures that suffer from an exponential decay of the transmission with the depth of the circuit, such as integrated photonic circuits or optical fibers. We prove that either the depth of the circuit is large enough that it can be simulated by thermal noise with an algorithm running in polynomial time, or it is shallow enough that a tensor network simulation runs in quasi-polynomial time. This result suggests that in order to implement a quantum advantage experiment with single-photons and linear optics new experimental platforms may be needed. 
\end{abstract}

\section{Introduction}

\begin{figure}[t!]
\centering
  \includegraphics[width=.9\linewidth]{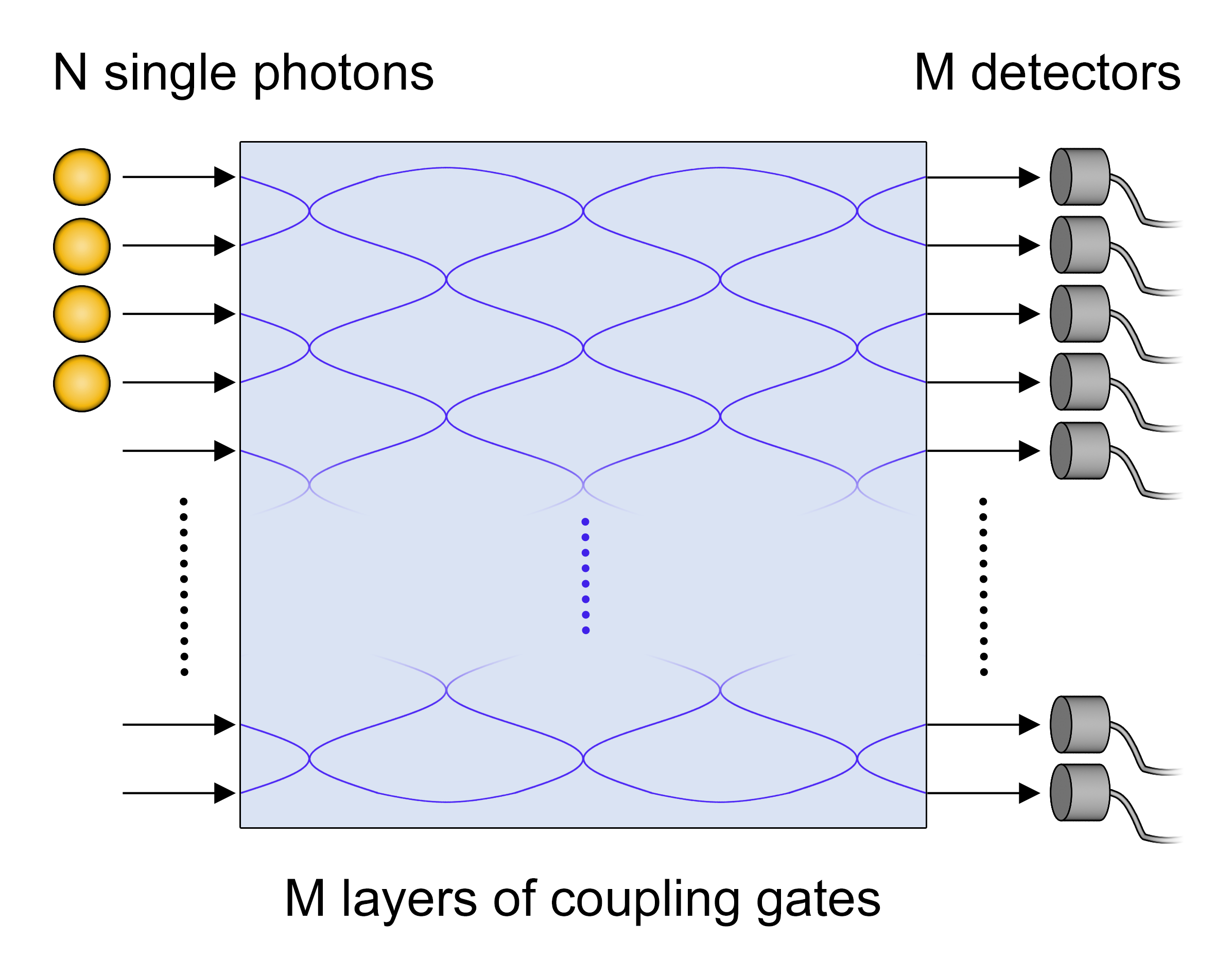}
\caption{In a boson sampling device $N$ single photons are sent over an $M$ mode linear optical circuit composed of $M$ layers of two-mode coupling gates and detected at the output with photon counting detectors. The circuit being selected randomly from the Haar measure,
we can place the $N$ photons over the first $N$ modes without lost of generality.}
\label{fig:figure1}
\end{figure} 

In 2003 Knill, Laflamme and Milburn showed that single-photon sources and linear optics are sufficient to achieve universal quantum computation \cite{Knill2001}.
A single-photon and linear optics version of measurement based quantum computation has also been thoroughly studied \cite{Kok2007,Rudol2016}.
In both proposals, a key component to reach universality is the capability for a measurement outcome to change the gates implemented later in time,
i.e., using active feed-forward, a challenging experimental requirement  \cite{Prev2007}.
In 2010, Aaronson and Arkhipov demonstrated that removing the feedforward condition provides a framework, called \textit{boson sampling}, that does not seem to be
sufficient for universal quantum computation but is hard to simulate on a classical computer \cite{Aaronson2013}.
The key idea behind their proof is the connection between the output statistics of non-interacting indistinguishable photons
and the permanent \cite{Caia53}, a quantity know to be \#P-hard to compute \cite{Vali79,Aaro11}.
As shown in Figure \ref{fig:figure1}, a boson sampling device implements an interference of $N$  single photons over a randomly-selected  $M$-mode linear optical interferometer and measures each output mode with a photon counting detector.
An $M$-mode linear optical interferometer can be build out of two-mode couplers (beamsplitters) acting on neighboring modes and single-mode phases gates. In order to generate an arbitrary linear optical circuit, a depth of $M$ is needed \cite{Reck94,Clements2016}.

Since 2013, a variety of experimental quantum optics groups have implemented proof-of-principle implementations based on different architectures, such as reconfigurable integrated photonic circuits \cite{BSintegrated,Broome2013,Tilmann2013}, fiber-loops \cite{BSfiberloop}, 3D waveguides \cite{BS3Dwaveguide} or multimode fibers \cite{Defi16}, that have the potential of being scalable and therefore are candidates for a quantum advantage demonstration.
The motivation to further simplify the experimental scheme led to the proposal of scattershot boson sampling, a sampling problem as hard to simulate as the initial boson sampling proposal. Scattershot boson sampling solves the problem of obtaining $N$ single photons from state of the art probabilistic single-photon sources by using $M$ heralded two-mode squeezed vacuum state sources,
one per input mode of the boson sampling circuit \cite{Lund14}.

The lack of fault-tolerant error correction in quantum advantage architectures, such as in boson sampling experiments, implies that increasing the size and depth of the circuit would ultimately lead to a system that is equivalent to sampling random noise. Therefore, the existence of an opportunity window where noise has not yet destroyed
the quantum advantage but a classical algorithm, such as \cite{Nevi17,Cliff17}, can no longer simulate the system is fundamental for a conclusive quantum advantage experiment.
It is therefore of paramount importance to have a good understanding of when noise makes a quantum advantage architecture classically simulatable.
Rahimi-Keshari et al. provided in \cite{Kesh16} a first rigorous bound, which required both losses and dark counts of the detectors.
Unfortunately, this bound is independent of the size of the system and can not provide an answer in terms of losses independently of additional noise (dark-counts of  detectors).

Together with the indistinguishably of photons, for which an algorithm to simulate partially distinguishable photons was demonstrated recently in \cite{Renema2016}, losses are the most damaging imperfections challenging boson sampling.  Despite its importance, little is know about the effect of losses. Firstly, it was proven in \cite{Aaronson2016}  
that boson sampling with losses remains hard in the regime of a constant number of photons lost, a rather limiting assumption. Secondly, modeling a lossy circuit as a larger lossless one with additional environmental modes, it is rather straightforward to see that the algorithm of Clifford and Clifford \cite{Cliff17} can be extended to lossy architectures with only a constant overhead. This has two implications: (i) if an ideal boson sampling circuit with $N$ input photons can be simulated (with an exponential-time algorithm), the same can be achieved for arbitrary losses; (ii) lossy multi-photon interference becomes classically simulatable when the number of photons left scales as $O(\log N)$.
In section \ref{sec:mainresult} of this work we will considerably improve over 
this trivial corollary of \cite{Cliff17} by proving that if the numbers of photons that reach the detectors are less than $O(\sqrt{N})$, a boson sampling experiments becomes classically efficiently simulatable.

Most experimental boson sampling architectures, and all for which interference has been shown for more than 2 photons, are based on a planar geometry
of depth proportional to the number of input systems, where the loss per coupler in the circuit is constant, leading to a law of exponential decay of the
transmission with the depth of the circuit. In section \ref{sec:mainresult} of this work we show that for those platforms, and platforms which have similar
exponential decay, boson sampling experiments can be efficiently simulated classically. Therefore, we believe that for single-photons and linear optics to remain competitive
in the race for a quantum advantage demonstration new ideas are needed.
More precisely, we show that for those platforms
either the depth of the circuit ($D$) is large enough ($D\geq \text{O}(\log M)$)
that it can be simulated by thermal noise with an algorithm running in polynomial time,
or the depth of the circuit is short enough ($D\leq \text{O}(\log M)$) that a tensor network simulation, similar in spirit to \cite{Temme2012},
runs in quasi-polynomial time.

Not all optical architectures suffer from an exponential decay of the transmission, for example  free-space optics has a quadratic decay of transmission.
In section \ref{sec:extra} we extend the validity of the thermal noise simulation to this family of architectures, to scattershot boson sampling and boson sampling architectures
where the photon-counting detectors are replaced by Gaussian measurements \cite{Lund17,Chak17}.

In sections \ref{sec:thermalstates}, \ref{sec:tensornetwork} 
of the manuscript we provide the detailed proofs stated in section \ref{sec:mainresult} and
in section \ref{sec:nonuniformlosses} we generalize the result to non-uniform losses. 
Finally, we conclude in section \ref{sec:conclusion}.

\section{Preliminaries}
\label{sec:preliminaries}

In this section, we review the concept of boson sampling, the most established model of losses used in the quantum optics literature, and the trace distance and its properties, needed in the understanding of our main result in section \ref{sec:mainresult}.
In this work we will use the notation $\ket{\bar{n}}=|n_{1}\rangle \otimes |n_{2}\rangle \otimes \ldots \otimes |n_{M}\rangle$, for the $M$-mode Fock basis,
where $\bar{n}$ correspond to a vector of $M$ integers and $|n_{i}\rangle$
is the Fock state corresponding to $n_i$ photons in mode $i$. 

\subsection{The ideal boson sampling model}
\label{subsec:introBS}
The boson sampling proposal concerns the interference of a multi-photon input state $\ket{\bar{1}_N}=|1\rangle^{\otimes N} \otimes |0\rangle^{\otimes M-N}$, 
over an $M$-mode linear optics 
interferometer modeled as a linear transformation of the annihilation operators
\begin{equation}
 \hat{b}_i=\sum_jU_{ij}\hat{a}_j,
\end{equation}
or $\mathbf{\hat{b}}=U\mathbf{\hat{a}}$ for an equivalent compact notation.
We remark that the unitarity of $U$ guarantees the preservation of the total photon number and that $U$ is an $M\times M$ matrix acting on the creation operators, to which corresponds a homomorphism $\varphi(U)$ of dimension $\binom{N+M-1}{N}$ acting on the $M$-mode Fock space \cite{Aaronson2013}.
At the output of the interferometer we implement a measurement in the photon number basis on each mode, where the probability of obtaining an outcome $\bar{z}$
reads $|\bra{\bar{z}}\varphi(U)\ket{\bar{n}}|^2$. When $\bar{z}$ corresponds to a string of bits, the probability outcome
is connected to the permanent of a submatrix of $U$, 
a crucial tool in the hardness proof of boson sampling \cite{Aaronson2013}.

A necessary condition in the proof of the hardness of boson sampling is the fact the $U$ needs to be a Haar random unitary.
The proposal by Reck et al. \cite{Reck94} showed that any general linear-optics transformation can be achieved with a planar circuit
composed of two-mode gates, using $M(M-1)/2$ gates distributed over $2M-3$ layers and $M$ parallel modes. A recent improvement in
\cite{Clements2016} remarkably brought this result to depth $M+1$, which is one unit close to the lower-bound $M$ obtained from a simple counting argument based on the degrees of freedom of a unitary matrix.
In order to make our result as general as possible we will consider the depth of the circuit $D$ as an additional free parameter.

In the initial boson sampling proposal, the proof necessitates a polynomial relation between the number of photons and modes ($M=\text{O}(N^5\log^2N)$).
In this work we consider the generalized relation
\begin{equation}
 N=kM^\gamma,
 \label{eq:photonscaling}
\end{equation}
where $0<k<1$ and $0<\gamma\leq1$.
It is easy to see that $\gamma=1/2$ corresponds to the bosonic birthday paradox ratio \cite{Arkhipov2012}.
This ratio ensures that for input states $\ket{\bar{n}}$ composed of single photons,
the probability of two or more boson bunching at the output is negligible (on average over the set of Haar random unitaries and on the asymptotic limit of a large system).
The case $\gamma=1/6$ corresponds to $M=\text{O}(N^6)$, which guarantees the condition in the hardness proof 
in \cite{Aaronson2013} to hold. 

Finally, $\gamma=1$ corresponds to the regime where the density of photons (with $k$ satisfying $k\le 1$ for single photons at the input) remains constant while the size of the system increases, as opposed to the original boson sampling proposal where it decreases with the size of the system. 

\subsection{Modeling losses}
\label{subsec:modelloss}

\begin{figure*}[!t!]
\centering
  \includegraphics[width=.9\linewidth]{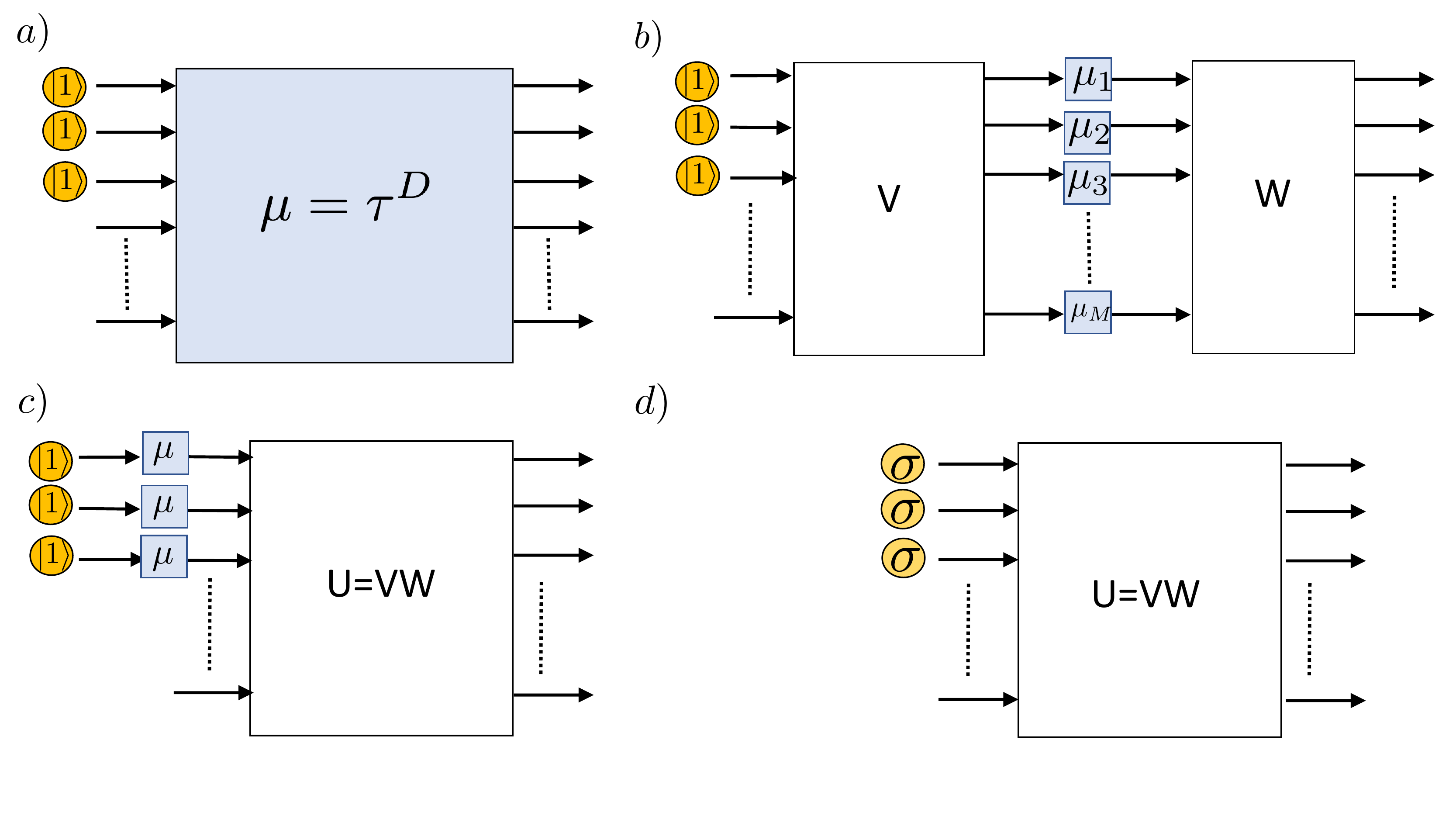}
\caption{a) $N$ single photons are sent over a lossy $M$ mode linear optical circuit composed of $D$
layers of two-mode couplers. An uniform transmission  per coupler of $\tau$  leads to an exponentially decay of the transmission $\mu=\tau^D$;
b) the real scheme in a) is indistinguishable from a circuit composed by a lossless linear optics transformation $V$, followed by $M$ parallel
set of pure-loss channels of transmission $\mu_i$ each, and a final lossless linear optics transformation $W$;
c) In the case where all the $\mu_i$ are equal, the virtual representation can be simplified to
a layer of $M$ identical pure-loss channels of transmission $\mu$ followed by a virtual lossless linear optics transformation $U=VW$;
d) The action of a pure-loss channel of transmission $\mu$ on a single-photon state $\ket{1}$ is
equivalent to an erasure channel of probability $\mu$, which outputs the mixed state $\sigma=(1-\mu)\proj{0}+\mu\proj{1}$.}
\label{fig:figure2}
\end{figure*}

A lossy linear optics circuit, as in Figure \ref{fig:figure2} a), can be
mathematically modeled by a complex matrix $A$ satisfying
$AA^\dagger\leq I$ and transforming the annihilation operators of $M$ input modes $\mathbf{\hat{a}}$ and $M$
environment modes $\mathbf{{\hat{e}}}$ as
\begin{equation}
 \mathbf{\hat{a}}_{\text{out}}=A\mathbf{\hat{a}}_{\text{in}}+\sqrt{I-AA^\dagger}\mathbf{\hat{e}}.
\end{equation}
$A$ has a singular value decomposition $A=V\hat{\mu} W$, where $V$ and $W$ are unitary matrices and
$\hat{\mu}={\rm diag}(\sqrt{\mu}_1,\sqrt{\mu}_2,...,\sqrt{\mu}_M)$ is a diagonal matrix of real values satisfying $\mu_i\in[0,1]$.
The singular value decomposition has a very natural interpretation, see Figure \ref{fig:figure2} b), which is that the real interferometer with losses
characterized by $A$ has an equivalent circuit composed by a lossless linear optics transformation $V$, followed by $M$ parallel
set of pure-loss channels of transmission $\mu_i$ each, and a final lossless linear optics transformation $W$.
A pure-loss channel is equivalent to a coupling interaction of transmission $\mu_i$ between our physical mode $i$ and an environmental mode, see \cite{Giovannetti2015} for details.
The matrix $A$ can be efficiently inferred using a simple tomographic technique that only needs two-mode interferences of classical laser light \cite{Rahimi13}.

In practice, a linear optics circuit is composed of a network of two-mode couplers and single-mode phase gates, where each layer of gates of an $M$-mode linear optics circuit is given by a direct product of local $2\times2$ linear transformations and complex scalars, resulting in a $M\times M$ complex banded matrix $A_i$ of bandwidth $1$. The total linear optics circuit transformation results from the multiplication of $D$ matrices $A_i$, i.e., $A=A_1A_2...A_D$.

All currently existing architecture proposals to implement a boson sampling experiment,
integrated photonic circuits, fiber-optical links and 3D-waveguides,
suffer from exponential decay of the transmission with the length of the circuit.
An intuitive explanation is that every photon has a constant probability of being lost per unit of length of the circuit or per layer of coupling gates.
For a planar circuit composed of $D$ layers of gates, where every gate has a transmission coefficient $\tau$,
we obtain that all the $\mu_i$ are equal and the transmission follows an exponential decay rule $\mu=\tau^D$.
Because $\hat{\mu}=\sqrt{\tau^D} I$ commutes with any matrix,
we can simplify the virtual representation of $A$ to
a layer of $M$ identical pure-loss channels of transmission $\mu$ followed by a virtual lossless linear-optics transformation $U=VW$ (see Figure \ref{fig:figure2} c)).

The action of a pure-loss channel of transmission $\mu$ into a single-photon state $\ket{1}$ is
equivalent to an erasure channel of probability $\mu$, resulting into a mixed state $\sigma=(1-\mu)\proj{0}+\mu\proj{1}$,
see Figure \ref{fig:figure2} d).
Therefore, boson sampling over a realistic interferometer with uniform losses is equivalent to an ideal boson sampler
over its virtual circuit $U=VW$ where we replace each of the $N$ single-photons of the input state, located in the first $N$ modes, by
the state $\sigma$, leading to a global input state 
\begin{eqnarray}
    \rho_{\text{\text{in}}}&=&\sigma^{\otimes N}\otimes\proj{0}^{\otimes(M-N)}, \nonumber\\
    \text{with} && \sigma=(1-\mu)\proj{0}+\mu\proj{1}.
    \label{eq:input}
\end{eqnarray}

\subsection{Trace distance and its properties}
\label{sub:tracedistance}
Before presenting our result we need to provide some definitions and properties of the trace distance \cite{Nielsen10}.
The trace distance between two quantum states $\rho$ and $\sigma$ reads (here and below $\|... \|$ stands for  the $1$-norm)
\begin{eqnarray}
D(\rho,\sigma)=\frac{1}{2}\mathrm{Tr}\left(\sqrt{\left( \rho-\sigma\right)^2}\right)=
\frac{1}{2}||\rho-\sigma||.
\end{eqnarray}
When both density matrices are diagonal in the same basis, i.e.,
$\rho=\sum_ip_i\proj{i}$ and $\rho=\sum_iq_i\proj{i}$, the trace distance is equivalent to the 
total variation distance of its corresponding eigenvalues
\begin{eqnarray}
D(\rho,\sigma)=\frac{1}{2}\|\bar{p}-\bar{q}\|=D(\bar{p},\bar{q}).
\end{eqnarray}
For simplicity, and following \cite{Nielsen10},  we will use the abuse of notation $D(\bar{p},\bar{q})$ for the total variation distance between two probability distributions. In what follows we will use three important properties of the trace distance: 
\begin{enumerate}
    \item Its invariance under unitary transformation 
    $D(U\rho U^\dagger,U\sigma U^\dagger)=D(\rho,\sigma)$.
    \item Its contractivity under a trace-preserving CP map $\mathcal{M}$, i.e.,
    $D(\mathcal{M}(\rho),\mathcal{M}(\sigma))\leq D(\rho,\sigma)$.
    \item Let $\{E_x\}$ be a POVM, with $p(x)=\text{Tr}[\rho E_x]$ and $q(x)=\text{Tr}[\sigma E_x]$ the probabilities of obtaining a measurement outcome labeled by $x$. Then 
    $D(p(x),q(x))\leq D(\rho,\sigma)$. Where there always exist a POVM that saturates the bound.
\end{enumerate}

Using the triangle inequality one can prove the following lemma.
\begin{lem}\label{lem:Ncopies}
The trace distance between $N$ copies of $\rho$ and $\sigma$ satisfies  the bound
\begin{equation}
 D(\rho^{\otimes N},\sigma^{\otimes N})\leq ND(\rho,\sigma).
 \label{eq:Ndistance}
\end{equation}
\end{lem}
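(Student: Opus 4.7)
The plan is to prove the bound by induction on $N$, using a telescoping sum together with the triangle inequality and the multiplicativity of the trace norm under tensor products.

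The base case $N=1$ is an identity. For the inductive step, I would insert the hybrid state $\rho^{\otimes N-1}\otimes\sigma$ between the two $N$-fold tensor products and apply the triangle inequality to obtain
\begin{equation}
D(\rho^{\otimes N},\sigma^{\otimes N})\leq D(\rho^{\otimes N}, \rho^{\otimes N-1}\otimes\sigma) + D(\rho^{\otimes N-1}\otimes\sigma, \sigma^{\otimes N}).
\end{equation}

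The key observation is that tensoring with a fixed state does not change the trace distance: writing $\|\rho_1\otimes(\rho_2-\sigma_2)\|=\|\rho_1\|\cdot\|\rho_2-\sigma_2\|$ (multiplicativity of the trace norm on tensor products) together with $\|\rho_1\|=1$, one concludes $D(\rho_1\otimes\rho_2,\rho_1\otimes\sigma_2)=D(\rho_2,\sigma_2)$. Applying this to the first term in the displayed inequality gives $D(\rho,\sigma)$, while for the second term it reduces the problem to $D(\rho^{\otimes N-1},\sigma^{\otimes N-1})$, to which the inductive hypothesis applies and yields $(N-1)D(\rho,\sigma)$. Summing the two contributions gives the claimed bound $ND(\rho,\sigma)$.

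There isn't really a hard step here; the only thing to be careful about is justifying that tensoring with a fixed state preserves trace distance (which the paper essentially already has at hand, since it can be seen either from the multiplicativity of the $1$-norm on tensor products or as a consequence of contractivity under the CPTP isometry $X\mapsto \rho_1\otimes X$ together with the reverse inequality from the partial trace being CPTP). Once that fact is invoked, the induction is completely mechanical, and the whole proof fits in a few lines.
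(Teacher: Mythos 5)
Your proof is correct and follows essentially the same route as the paper's: a telescoping decomposition, the triangle inequality, and the multiplicativity of the trace norm under tensor products (the paper writes this out explicitly for $N=2$ and notes the generalization is straightforward, which is exactly the induction you carry out). No issues.
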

For simplicity we show the proof for $n=2$, where we make use of the triangle inequality
and $\| \rho\otimes \sigma \|=\| \rho \| \|\sigma \|$:
\begin{align*}
\| \rho^{\otimes 2} - \sigma^{\otimes 2}\| &= \| \rho^{\otimes 2} -
\sigma \otimes \rho+\sigma \otimes \rho - \sigma^{\otimes 2} \| \\
&\leq \| \rho^{\otimes 2} - \sigma \otimes \rho \|+
\|\sigma \otimes \rho - \sigma^{\otimes 2}\| \\
&=  \| (\rho-\sigma) \otimes \rho \|+ \|\sigma
\otimes (\rho-\sigma)  \| \\
&= 2 \|\rho-\sigma\|
\end{align*}
Its generalization to $n>2$ is straightforward.

\section{Main result}
\label{sec:mainresult}
In this section we summarize the results of this manuscript, 
the detailed derivation of the lemmas \ref{lem:classicalalgo}, \ref{lem:TN} in the proofs are presented in sections \ref{sec:thermalstates} and \ref{sec:tensornetwork}.
To make our proofs more accessible, we first restrict to the scenario of uniform loss and later generalize the results to arbitrary circuits in Section \ref{sec:nonuniformlosses}, where we show that any result that holds for uniform $\mu$ can be generalized to $\mu_{max}=\max_i\mu_i$.

\subsection{Lossy boson sampling as thermal noise}
The key element of our proof is to approximate the input state $\rho_{\text{in}}$ in equation (\ref{eq:input}) with the state $\rho_{\text{T}}=\sigma_{\text{th}}^{\otimes N}\otimes\proj{0}^{\otimes(M-N)}$,
composed of $N$ thermal states in the first $N$ modes and the remaining $M-N$ input modes in a vacuum state. A thermal state is given by the Bose-Einstein distribution
\begin{equation}
\sigma_{\text{th}} = (1-\lambda)\sum_{x=0}^\infty \lambda^x \proj{x},
 \label{eq:thermal}
\end{equation}
where $\lambda = \frac{z}{1+z}$, with $z$ being the average number of photons.
This allows us to proof the following theorem.

\begin{thm}\label{th:losses}
For any multi-photon interferometer circuit of uniform transmission $\mu$ satisfying
\begin{equation}
 \mu\leq\sqrt{\frac{\epsilon}{N}},
 \label{eq:lossescond}
\end{equation}
we have $D\left(\rho_{\text{in}},\rho_{\text{T}}\right)\leq\epsilon$.
\end{thm}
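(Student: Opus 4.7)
The plan is to reduce the comparison of the two $M$-mode states to a comparison of single-mode states, using the three properties of the trace distance from Section~\ref{sub:tracedistance} together with Lemma~\ref{lem:Ncopies}. Both $\rho_{\text{in}}$ and $\rho_{\text{T}}$ have the form (something on the first $N$ modes) $\otimes |0\rangle\langle 0|^{\otimes(M-N)}$, and attaching the same pure ancilla is a trace-preserving isometry, so by contractivity (property 2) we have $D(\rho_{\text{in}},\rho_{\text{T}})=D(\sigma^{\otimes N},\sigma_{\text{th}}^{\otimes N})$. Applying Lemma~\ref{lem:Ncopies} then reduces the problem to bounding the single-mode trace distance $D(\sigma,\sigma_{\text{th}})$.

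Next, I would pick the thermal parameter $\lambda$. The natural (and, up to symmetry, optimal) choice is to match the vacuum weight, i.e.\ $\lambda=\mu$, equivalently $z=\mu/(1-\mu)$, which also matches the mean photon numbers of $\sigma$ and $\sigma_{\text{th}}$. With this choice, both $\sigma$ and $\sigma_{\text{th}}$ are diagonal in the Fock basis, so the trace distance equals the total variation distance between the eigenvalue sequences $(1-\mu,\mu,0,0,\ldots)$ and $(1-\mu,(1-\mu)\mu,(1-\mu)\mu^2,\ldots)$.

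A direct computation term by term gives
\begin{equation}
D(\sigma,\sigma_{\text{th}})=\tfrac{1}{2}\Bigl[|\mu-(1-\mu)\mu|+\sum_{x\geq 2}(1-\mu)\mu^{x}\Bigr]=\mu^{2},
\end{equation}
since $|\mu-(1-\mu)\mu|=\mu^{2}$ and the tail sums to $(1-\mu)\cdot\mu^{2}/(1-\mu)=\mu^{2}$, while the $x=0$ terms cancel exactly. Combining with Lemma~\ref{lem:Ncopies} yields $D(\rho_{\text{in}},\rho_{\text{T}})\leq N\mu^{2}$, which is $\leq\epsilon$ precisely under the hypothesis~\eqref{eq:lossescond}.

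I do not expect a genuine obstacle here; the only nontrivial choice is picking $\lambda=\mu$, and one should briefly justify that this is (essentially) optimal so that the exponent $1/2$ in $\mu\leq\sqrt{\epsilon/N}$ is the best attainable from this single-copy bound. Any other choice $\lambda\ne\mu$ introduces a linear-in-$|\lambda-\mu|$ contribution from the $|0\rangle,|1\rangle$ terms that dominates $\mu^{2}$ for small $\mu$, so matching the vacuum weight is forced. It is also worth remarking that the argument uses the single-mode tensor-product structure of $\rho_{\text{in}}$ and $\rho_{\text{T}}$ before the virtual interferometer $U=VW$ acts; the subsequent unitary evolution and photon-counting measurement are invoked only later (via properties 1 and 3) to transfer this bound to the output sampling distributions.
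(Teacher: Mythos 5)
Your proof is correct and takes essentially the same route as the paper: reduce to a single-mode comparison via Lemma~\ref{lem:Ncopies} and evaluate the trace distance as a total variation distance of Fock-basis eigenvalues with $\lambda=\mu$, which is exactly where the paper lands after its more roundabout case analysis optimizing over $\lambda$ (it concludes ``without loss of generality, we can set $\lambda=\mu$''), and your direct computation $D(\sigma,\sigma_{\text{th}})=\mu^2$ agrees with the paper's general formula $\tfrac{1}{2}\left(\lambda^2+|\mu-\lambda|+|\lambda(1-\lambda)-\mu|\right)$ specialized to $\lambda=\mu$. One minor inaccuracy in a side remark: with $\lambda=\mu$ the vacuum weights match but the mean photon numbers do not ($\mu$ for $\sigma$ versus $z=\mu/(1-\mu)$ for $\sigma_{\text{th}}$); this is not used anywhere in your argument, so the proof stands.
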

\begin{proof}
One can use Lemma \ref{lem:Ncopies} to obtain the bound
$D(\rho_{\text{T}},\rho_{\text{in}})\leq ND(\sigma_{\text{th}},\sigma)$.
To have $D(\rho_{\text{T}},\rho_{\text{in}})\le \epsilon $ we require that $D(\sigma_{\text{th}},\sigma)\le \epsilon/N$.  A simple calculation gives
$D(\sigma_{\text{th}},\sigma)=\frac{1}{2}\left(\lambda^2+|\mu-\lambda|+|\lambda(1-\lambda)-\mu|\right)$. There are three cases: (i) $\mu \le \lambda(1-\lambda)$, (ii) $ \lambda(1-\lambda)\le \mu \le \lambda$ and (iii) $\lambda\le \mu$. All of them lead to the condition $\lambda\le \sqrt{\epsilon/N}$ and three intervals for   $\mu$  given  $\lambda$. The union of the latter reads
$\lambda- \epsilon/N\le \mu \le \lambda(1-\lambda) +\epsilon/N$.  We are looking for  any $\lambda$  and the maximal possible value of $\mu$ satisfying  $D(\sigma_{\text{th}},\sigma)\le \epsilon/N$.  Notice  that the upper bound $\mu\le \lambda(1-\lambda) +\epsilon/N $ grows with $\lambda$ (for $\lambda\le 1/2$) and  reaches its  maximal value for $\lambda_{\mathrm{max}} =  \sqrt{\epsilon/N}$ resulting in equation (\ref{eq:lossescond}).  Without loss of generality, we can set $\lambda =\mu$.
\end{proof}

In what follows we use the notation $p_{\text{in}}(\bar{n})$ and $p_{\text{T}}(\bar{n})$ for the outcome probabilities resulting from applying a linear interferometer to $\rho_{\text{in}}$ and $\rho_{\text{T}}$ followed by
a photon counting measurement. Because quantum operations and measurement can only decrease the trace distance,
a corollary of Theorem \ref{th:losses} is the bound 
$D\left(p_{\text{in}}(\bar{n}),p_{\text{T}}(\bar{n})\right)\leq\epsilon$. 
Therefore, any classical algorithm efficiently simulating
a boson sampling experiment with input state $\rho_{\text{T}}$ 
will be a good $\epsilon$-approximation of a lossy boson sampling experiment satisfying equation 
(\ref{eq:lossescond}). Theorem \ref{th:losses} formalizes, for multi-photon interference, the 
commonly held belief that any quantum supremacy experiment without access to error correction is equivalent to random noise after some noise threshold.

Let's analyze the prospect of a finite size boson sampling experiment under realistic conditions.
A simple calculation shows that for a multi-photon interference of $N$ photons
and transmission per coupler of $\tau=1-x$, the depth $\tilde{D}$ above which the output becomes 
$\epsilon$-close to a thermal state reads
\begin{equation}
    \tilde{D}=\frac{\log\frac{N}{\epsilon}}{2\log\frac{1}{1-x}}.
\end{equation}
Setting $\epsilon=10^{-6}$, a boson sampling experiment with $N=100$ photons, selected in order to discard a brute-force simulation via permanent calculations using \cite{Cliff17}, and realistic values for the loss of $x=10^{-3}$ per coupler in the circuit \cite{Taballione2018,Heck2014,Carolan2015,Harris2016}, we obtain $\tilde{D}=9.205$. This value is smaller than the
birthday paradox constraint $D=M\geq N^2$ and many orders of magnitude 
bellow the true boson sampling condition $O(N^5\log^2N)$. This is sufficient to
discard a quantum supremacy experiment within the boson sampling paradigm 
with current experimental capabilities and theoretical know-how. 

In order to satisfy the classical simulatability condition in equation~(\ref{eq:lossescond}) 
the transmission $\mu$ needs to decrease for fixed $\epsilon$ and increasing $N$.
This shows that the bound is relatively loose, and improvements are certainly possible. Nonetheless, it is sufficient to prove that platforms suffering from exponential decay of transmission, such as integrated photonics and fiber optics, can be efficiently simulatable on a classical computer, as shown in the next subsection.

\subsection{Exponential decaying transmission}
\label{subsec:BS expdecay}
In this section we provide an efficient algorithm for simulating multi-photon interference over
architectures with exponential decay of transmission.
%such as integrated photonic circuits or fiber optics.

\subsubsection{Sufficient condition for efficient classical simulation of multi-photon interference}
\label{subsec:classicalsimu}

The following lemma on the efficient classical simulation of photo-counting of interfering thermal states, which was implicit in  \cite{Rahimi2015}, is proven in section \ref{sec:thermalstates},
\begin{lem}\label{lem:classicalalgo}
There exists a polynomial time classical algorithm that simulates the
evolution of a thermal state $\rho_T$ over an ideal or lossy interferometer followed by measurement in the photon number basis,
where the output distribution is $\epsilon$-close to the ideal one and  
the computational running time scales as $O\left(\frac{MN^2}{\epsilon}\left[\log N+\log(1/\epsilon)\right]^2\right)$
for transmission $\mu$ satisfying equation (\ref{eq:lossescond}).
\end{lem}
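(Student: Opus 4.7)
The plan is to exploit the fact that a thermal state admits a positive Gaussian Glauber--Sudarshan $P$-representation, which reduces the quantum sampling task to a purely classical Monte Carlo problem. Writing
$$\sigma_{\text{th}} = \int \frac{d^2\alpha}{\pi z}\, e^{-|\alpha|^2/z}\, \proj{\alpha}, \quad z=\frac{\lambda}{1-\lambda},$$
one sees that $\rho_T=\sigma_{\text{th}}^{\otimes N}\otimes\proj{0}^{\otimes(M-N)}$ is a classical mixture of product coherent states $\ket{\alpha_1}\cdots\ket{\alpha_M}$, where the first $N$ amplitudes are i.i.d.\ complex Gaussians of variance $z$ and the remaining $M-N$ amplitudes are zero.

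Next I would track the evolution through the circuit. Pure-loss channels of transmissivity $\mu$ act on coherent amplitudes by $\alpha\mapsto\sqrt{\mu}\,\alpha$, and any unitary $U$ acts by $\alpha\mapsto U\alpha$, so the whole circuit pushes the random input vector $\bm{\alpha}\in\mathbb{C}^M$ forward to $\bm{\beta}=\sqrt{\mu}\,U\bm{\alpha}$ (and more generally to $V\hat{\mu}W\bm{\alpha}$ in the non-uniform case), which is itself a complex Gaussian vector with an explicit covariance. Since photon counting on a coherent state $\ket{\beta_i}$ yields an independent $\text{Poisson}(|\beta_i|^2)$ outcome, the ideal output distribution is exactly a Gaussian mixture of products of Poisson distributions.

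These observations immediately yield the algorithm: (i)~draw the $N$ nontrivial complex Gaussian amplitudes; (ii)~compute $\bm{\beta}=\sqrt{\mu}\,U\bm{\alpha}$ by an $M\times N$ matrix-vector product in $O(MN)$ arithmetic operations; (iii)~for each output mode $i$, independently sample $n_i\sim\text{Poisson}(|\beta_i|^2)$. Correctness is immediate from the three observations above, and the lossy case is handled automatically by absorbing $\sqrt{\mu}$ into the rescaled Gaussian covariance, so no separate argument is needed.

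The hard part will be the precision bookkeeping needed to recover the stated runtime. Because $\rho_T$ is being sampled only approximately---one must discretize the continuous Gaussian draws and truncate the infinite Poisson tails---care is required to keep the induced total-variation error below $\epsilon$. I would split the budget as $O(\epsilon/M)$ per mode, use $O(\log(N/\epsilon))$ bits of precision per amplitude so that the bit cost of each of the $MN$ arithmetic operations scales as $[\log N+\log(1/\epsilon)]^2$, and exploit the bound $\mu\le\sqrt{\epsilon/N}$ from Theorem~\ref{th:losses} to truncate the Poisson samplers cheaply (since the expected per-mode photon number is then $O(\sqrt{\epsilon/N})$). Collecting the bit-complexity contributions and the sampling-resolution factor (which scales with $N/\epsilon$ once the precision needed to resolve the physical signal is taken into account) should then yield the claimed $O\left(MN^2\epsilon^{-1}[\log N+\log(1/\epsilon)]^2\right)$ bound.
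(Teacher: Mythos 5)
Your reduction to classical sampling --- the Gaussian Glauber--Sudarshan $P$-representation of $\rho_T$, the action of the circuit on coherent amplitudes (with losses absorbed as $\bm{\alpha}\mapsto V\hat{\mu}W\bm{\alpha}$), and Poissonian photocounting of coherent states --- is exactly the idealized three-step algorithm the paper starts from, so the conceptual core is right and matches the paper, including the treatment of the lossy case. The problem is that the entire content of Lemma~\ref{lem:classicalalgo} beyond ``a polynomial-time algorithm exists'' lives in the part you defer to ``precision bookkeeping,'' and your sketch of that part is not a derivation of the stated runtime; the factors you assign to it do not correspond to anything you have constructed.

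Two concrete ingredients are missing. First, you need a provable trace-norm bound between $\rho_T$ and the finitely-describable state your discretized sampler actually prepares: ``$O(\log(N/\epsilon))$ bits of precision per amplitude'' is not such a bound on its own --- you must truncate the Gaussian tail, convert amplitude rounding into trace-norm error on the coherent states, and account for how you generate the Gaussian variates from finite randomness. The paper does this by replacing the Gaussian $P$-function with a Gauss--Hermite coherent-state constellation of $m^2$ points per mode and bounding the trace distance via the $\chi^2$-divergence, giving $D(\rho_T,\tilde{\rho}_T)\leq 3\kappa^2 N\lambda^m/(1-\lambda)$ and hence $m=O(\log N+\log(1/\epsilon))$; this $m$, entering through the bound $\max|\alpha|\leq\sqrt{2m}$ on the constellation points, is the true origin of the $[\log N+\log(1/\epsilon)]^2$ factor --- not the bit complexity of arithmetic, as you propose. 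Second, you need a concrete approximate Poisson sampler with a total-variation guarantee: the paper uses $t$ Bernoulli trials per mode with the Barbour--Hall bound $\leq x^2/t$, and the resulting trial count $t=O(MN^2m^2/\epsilon)$, driven by the \emph{worst-case} $|\bm{\beta}|^4=O(N^2m^2)$ over the constellation, is what produces the dominant $MN^2/\epsilon$ factor. Your ``sampling-resolution factor which scales with $N/\epsilon$'' is asserted rather than derived, and your observation that the mean output photon number is small under condition~(\ref{eq:lossescond}), while true on average, does not control the worst case over the support of the discretized input, which is what the error analysis requires.
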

As detailed in section \ref{sec:thermalstates} the algorithm combines the three following well-know facts in quantum optics.
Firstly, any thermal state $\rho_{\text{T}}$ has a Glauber-Sudarshan $P$-representation as a mixture of an $N$-mode tensor product of coherent states
$\ket{\bm{\alpha}}=\bigotimes_{i=1}^M \ket{\alpha_i}$ according to a Gaussian distribution $P(\bm{\alpha})$ \cite{Glauber}. 
Secondly, a linear-optical circuit characterized by a unitary matrix $U$ transforms a tensor product of coherent states $\bm{\alpha}$ into another tensor product of coherent states $\ket{\bm{\beta}}$ satisfying $\bm{\beta}=U\bm{\alpha}$.
Thirdly, coherent states follow a Poisson photon number distribution, which can be sampled efficiently.
Finally, we exploit the fact that there exist efficient constellations of coherent states that 
are arbitrarily close to the ideal thermal state with probability
distribution $P(\bm{\alpha})$ \cite{Lacerda17}. 
The combination of Theorem \ref{th:losses} and Lemma \ref{lem:classicalalgo} will allow us to simulate a lossy boson sampling architecture composed of $D$ layers of gates with exponentially decaying transmission $\mu=\tau^D$, as stated in the following theorem.

\begin{thm}\label{thm:classsimu}
The statistics $p_{\text{in}}(\bar{n})$ of $N$ photons interfering over an $M$-mode linear optics planar circuit of depth $D$, transmission $\tau$ per layer of
gates, and a relation between photons and modes given by  equation~(\ref{eq:photonscaling})
can be approximated with trace distance error $\epsilon$ in polynomial time for $D\geq D^*$, where \begin{equation}
 D^*=\frac{1}{2\log\left(\frac{1}{\tau}\right)}\left[\gamma\log M+
 \log\left(\frac{k}{\epsilon}\right)+\log2\right].
 \label{eq:circuitthreshold}
\end{equation}
\end{thm}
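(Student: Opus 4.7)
The plan is to combine Theorem~\ref{th:losses} with Lemma~\ref{lem:classicalalgo}, and then translate the resulting loss-level condition into a depth condition using $\mu=\tau^D$ together with the scaling $N=kM^\gamma$.

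First, I would invoke Theorem~\ref{th:losses} for the virtual representation in Figure~\ref{fig:figure2}c: whenever $\mu\leq\sqrt{\epsilon/N}$, the physical lossy input $\rho_{\text{in}}$ is $\epsilon$-close in trace distance to the thermal replacement $\rho_{\text{T}}$. Using properties 2 and 3 of the trace distance recalled in Section~\ref{sub:tracedistance} (contractivity under the trace-preserving CP map $\varphi(U)$ implementing the lossless virtual interferometer $U=VW$, followed by contractivity under the photon-counting POVM), this immediately upgrades to $D(p_{\text{in}}(\bar n),p_{\text{T}}(\bar n))\leq \epsilon$. Hence any sampler whose output distribution is close to $p_{\text{T}}$ is automatically close to $p_{\text{in}}$.

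Second, I would apply Lemma~\ref{lem:classicalalgo}, which already provides a polynomial-time classical algorithm whose output distribution is $\epsilon$-close to $p_{\text{T}}(\bar n)$. Combining the two bounds by the triangle inequality gives a total variation error of at most $2\epsilon$ with respect to the target $p_{\text{in}}(\bar n)$; reabsorbing this factor of $2$ into the accuracy parameter (equivalently, sharpening the hypothesis of Theorem~\ref{th:losses} to $\mu\leq\sqrt{\epsilon/(2N)}$) is what produces the additive $\log 2$ appearing in $D^*$.

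Third, I would convert the resulting condition on $\mu$ into the condition on $D$ stated in equation~(\ref{eq:circuitthreshold}). Substituting the exponential decay $\mu=\tau^D$ into $\mu\leq\sqrt{\epsilon/(2N)}$ and taking logarithms yields $D\log(1/\tau)\geq \tfrac{1}{2}\log(2N/\epsilon)$; inserting $N=kM^\gamma$, so that $\log N=\gamma\log M+\log k$, this rearranges directly to
\begin{equation*}
 D\geq \frac{1}{2\log(1/\tau)}\bigl[\gamma\log M + \log(k/\epsilon) + \log 2\bigr].
\end{equation*}
The two nontrivial ingredients (Theorem~\ref{th:losses} and Lemma~\ref{lem:classicalalgo}) do all the real work, so I do not expect a conceptual obstacle in this final step. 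The only thing to be careful about is the bookkeeping of how the two $\epsilon$-errors combine through the triangle inequality and how, under the exponential law $\mu=\tau^D$, even a very mild tightening of $\epsilon$ translates into only a logarithmic increase in $D^*$, leaving the quoted threshold of order $\log M$.
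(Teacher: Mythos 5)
Your proposal is correct and follows essentially the same route as the paper's own proof: triangle inequality splitting the error between the thermal-state approximation (Theorem~\ref{th:losses}, via contractivity of the trace distance under the interferometer and the measurement) and the sampling algorithm of Lemma~\ref{lem:classicalalgo}, with each contributing $\epsilon/2$, followed by substituting $\mu=\tau^D$ and $N=kM^\gamma$ into $\mu\leq\sqrt{\epsilon/(2N)}$ to obtain $D^*$. The bookkeeping of the factor of $2$ and the origin of the $\log 2$ term match the paper exactly.
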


\begin{proof}
Any classical algorithm that generates a distribution $\tilde{p}_{\text{T}}(\bar{n})$ approximating the sampling from an ideal thermal state distribution $p_{\text{T}}(\bar{n})$,
where $\lambda=\mu$, satisfies the bound
\begin{widetext}
\begin{equation}
D(p_{\text{in}}(\bar{n}),\tilde{p}_{\text{T}}(\bar{n}))
\leq D(p_{\text{in}}(\bar{n}),p_{\text{T}}(\bar{n}))+D(\tilde{p}_{\text{T}}(\bar{n}),p_{\text{T}}(\bar{n})) 
\leq D(\rho_{\text{in}},\rho_{\text{T}})+D(\tilde{p}_{\text{T}}(\bar{n}),p_{\text{T}}(\bar{n})),
\label{eq:Dlimit}
\end{equation}
\end{widetext}
where we use the triangle inequality in the first inequality and the fact that a measurement over a quantum state can only decrease its trace norm in the second.
We can now apply Theorem \ref{th:losses} to set the bound $D(\rho_{\text{in}},\rho_\text{T})<\epsilon/2$ and Lemma \ref{lem:classicalalgo} to bound $D(\tilde{p}_{\text{T}}(\bar{n}),p_{\text{T}}(\bar{n}))<\epsilon/2$.
Then, the classical simulability condition $D\geq D^*$ can be trivially derived starting from the condition $\mu\leq\sqrt{\epsilon/(2N)}$ adapted from Theorem \ref{thm:classsimu}, replacing the transmission by $\mu=\tau^D$, taking the logarithm and replacing $N$ by eq.~(\ref{eq:photonscaling}).
Therefore, when the condition $D\geq D^*$ is satisfied, by properly selecting a thermal state $\rho_T$ satisfying $\lambda=\mu$ and running the algorithm sketched in the discussion after Lemma \ref{lem:classicalalgo}
(see  section \ref{sec:thermalstates} for details), $\tilde{p}_{\text{T}}(\bar{n})$ provides an $\epsilon$-approximation of $p_{\text{in}}(\bar{n})$ in polynomial time.
\end{proof}

\subsubsection{Simulation of shallow boson sampling circuits using tensor networks}

In section \ref{sec:tensornetwork} we show how one can simulate an ideal boson sampling circuit using tensor network techniques,
which can be summarized in the following lemma.

\begin{lem}\label{lem:TN}
An ideal boson sampling circuit with $N$ interfering photons over an $M$-mode linear interferometer of depth $D$
can be simulated exactly using tensor networks with a computational running time $\text{O}(M^2(N+1)^{8D})$.
\end{lem}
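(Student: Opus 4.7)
The plan is to exploit photon-number conservation to truncate the infinite-dimensional local Hilbert space of each optical mode to a manageable dimension, then represent the evolving state as a matrix product state (MPS) and simulate the brickwork circuit layer by layer. Since the total photon number is conserved and equal to $N$, no single mode can hold more than $N$ photons at any point in the circuit, so the local Fock space of each mode may be truncated to dimension $d=N+1$ without any error. The input state $\ket{\bar{1}_N}$ is a product state, hence an MPS of bond dimension $1$, and each two-mode coupler is a $d^2\times d^2$ unitary that commutes with total particle number.

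Next, I would apply the $O(MD)$ gates of the circuit sequentially to the MPS using the standard scheme: contract each gate into the two adjacent site tensors, then restore canonical MPS form by a singular value decomposition. The key step is to bound the bond dimension $\chi$ reached after $D$ layers. Using a light-cone argument, any entanglement across a cut between modes $i$ and $i+1$ can only be carried by photons whose forward light cones cross the cut, which restricts the relevant modes on each side of the cut to a band of width $O(D)$. Combined with the truncated local dimension $d=N+1$, this yields $\chi\le (N+1)^{O(D)}$, and in particular the Schmidt decomposition across every cut can be stored and manipulated efficiently in $D$.

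With the bond dimension controlled, sampling from the output distribution then proceeds by the standard sequential conditional-sampling procedure for an MPS: compute the marginal $p(n_1)$, sample $n_1$, then compute $p(n_2\mid n_1)$, sample $n_2$, and so on up to mode $M$. Each marginal is obtained by contracting the MPS with the measurement projectors on the already-sampled modes; this costs a polynomial in $\chi$ and $d$ per site, times the $O(M)$ sites in the chain. Multiplying the $O(M)$ sampling rounds by the $O(M)$ per-round contraction cost and by the per-tensor cost, which scales as some fixed power of $\chi=(N+1)^{O(D)}$, yields the stated bound $O(M^2(N+1)^{8D})$.

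The main obstacle is turning the light-cone/photon-number-conservation intuition into a quantitatively sharp bond-dimension bound, and then carefully accounting for the arithmetic cost of both the gate-application SVDs and the marginal evaluations used for sampling: the specific exponent $8D$ in the running time depends on whether one charges $O(\chi^3)$ for each SVD-based gate update or $O(\chi^4)$ for a direct tensor contraction, and on how tightly particle-number superselection is exploited to shrink the effective $\chi$. Everything else is bookkeeping: once the $\chi\le(N+1)^{O(D)}$ bound is in hand, the polynomial overhead in $M$ and $N$ follows from the standard MPS toolkit.
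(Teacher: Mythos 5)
Your proposal follows essentially the same route as the paper: truncate each mode's Fock space to dimension $N+1$ by photon-number conservation, evolve the product input state as a matrix product state whose bond dimension across any cut of a depth-$D$ brickwork circuit is only $(N+1)^{O(D)}$, and sample by the sequential chain-rule contraction of marginals. The only real difference is bookkeeping: the paper pins down the exponent $8D$ by taking the singular value decomposition of each two-mode coupler as a matrix product operator of operator-Schmidt rank at most $(N+1)^2$ (so $\chi\leq(N+1)^{2D}$ with no re-canonicalization of the state) and charging $O\left(M\chi^4(N+1)^2\right)$ per sampling round, whereas you leave the exponent implicit and would obtain it from the Vidal-style update plus your light-cone bound.
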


Tensor networks are a way of encoding quantum states and operating with them that have proven to be very successful
in many-body physics~\cite{Wh92, OeRo95, VePoCi04, Orus14}. Our tensor network proof is a quantum optics version,
adapted from \cite{Temme2012}, of the well-know result that logarithmic-depth planar circuit of $M$ qudits can be simulated on polynomial time
\cite{Jozsa06}. It is easy to see that when the depth of the circuit scales logarithmically with the number of modes $M$
and $N$ satisfies equation (\ref{eq:photonscaling}), our algorithm runs in quasipolynomial time.
This is due to the unbounded nature of the Hilbert space of optical modes.
In order to have an exact simulation we need to fix the local dimension on each mode to be as large as the total number of photons in the circuit, which results into a quasipolynomial-time algorithm.

\subsubsection{Efficient simulation of architectures with exponential decaying transmission}

Now, combining theorem \ref{thm:classsimu} and lemma \ref{lem:TN} we can classically simulate any multi-photon interference architecture that has an exponential decaying transmission, as stated by the following theorem.
\begin{thm}\label{th:expdecay}
The statistics of $N$ photons interfering over an $M$-mode planar linear-optical circuit of depth $D$, transmission $\tau$ per layer of
gates, and a relation between photons and modes given by  equation~(\ref{eq:photonscaling})
can be approximated with trace distance error $\epsilon$ in polynomial time for $D\geq D^*$ and in
quasi-polynomial time for $D\leq D^*$, where $D^*$ is defined in equation (\ref{eq:circuitthreshold}).
\end{thm}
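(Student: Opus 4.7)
The plan is to dichotomize on whether $D$ lies above or below the threshold $D^*$ defined in equation~(\ref{eq:circuitthreshold}) and invoke one of the two previously established results in each regime. For $D \geq D^*$, Theorem~\ref{thm:classsimu} already delivers an $\epsilon$-close sampler in polynomial time, via the thermal approximation of Theorem~\ref{th:losses} composed with Lemma~\ref{lem:classicalalgo}, so that half requires no new work.

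The short regime $D \leq D^*$ is where I would invoke Lemma~\ref{lem:TN}, applied to the equivalent representation of Figure~\ref{fig:figure2}(c) --- a single layer of $M$ identical pure-loss channels followed by the lossless interferometer $U = VW$. Since the lemma provides an exact simulation, the $\epsilon$-approximation requirement is automatic; all that remains is to check that the running time $O(M^2(N+1)^{8D})$ is quasi-polynomial in $M$. Treating $\tau$, $k$, $\gamma$, and $\epsilon$ as constants, equation~(\ref{eq:circuitthreshold}) gives $D^* = O(\log M)$, while equation~(\ref{eq:photonscaling}) gives $\log(N+1) = O(\log M)$. Consequently,
\begin{equation*}
\log\bigl[(N+1)^{8D}\bigr] = 8D\log(N+1) = O\bigl((\log M)^2\bigr),
\end{equation*}
so the total cost is $2^{O((\log M)^2)}$, which is by definition quasi-polynomial in $M$.

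The one point worth verifying carefully --- though I do not expect it to be a real obstacle --- is the applicability of Lemma~\ref{lem:TN} to the lossy setting, since the lemma as stated concerns an \emph{ideal} circuit. The cleanest route is to observe that the lossy architecture is indistinguishable from feeding the product mixed input $\rho_{\text{in}}$ of equation~(\ref{eq:input}) into the lossless unitary $U$; the tensor-network contraction underlying Lemma~\ref{lem:TN} accepts any product input state with local Hilbert-space dimension bounded by $N+1$, so the $(N+1)^{8D}$ scaling is unchanged. Gluing the two regimes at $D = D^*$ then covers every depth and completes the proof.
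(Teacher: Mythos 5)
Your proposal follows essentially the same route as the paper: the regime $D\geq D^*$ is dispatched by Theorem~\ref{thm:classsimu}, the regime $D\leq D^*$ by the tensor-network simulation of Lemma~\ref{lem:TN} applied to the virtual lossless circuit $U=VW$ with the product input $\rho_{\text{in}}$ of equation~(\ref{eq:input}), and your count $8D\log(N+1)=O((\log M)^2)$ correctly yields quasi-polynomial time. The one place where you are looser than the paper is the mixed input: Lemma~\ref{lem:TN} is a pure-state matrix-product-state simulation, so it does not literally ``accept any product input state'' if that state is mixed; since $\sigma=(1-\mu)\proj{0}+\mu\proj{1}$ is diagonal in the Fock basis, the paper first samples the surviving-photon pattern (each photon kept with probability $\mu=\tau^D$) and then runs the pure-state simulation on the resulting Fock input, which is the precise version of your remark and costs nothing extra.
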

\begin{proof}
The classical simulability under condition $D\geq D^*$ is a direct corollary of Theorem \ref{thm:classsimu}. For a uniform losses circuit satisfying $D\leq D^*$ we exploit the equivalence between a lossy multi-photon interference and an ideal interference over the circuit $U$ with input state $\rho_{\text{in}}=\sigma^N\otimes\proj{0}^{M-N}$, as explained in subsection \ref{subsec:modelloss}.
We model the initial state $\rho_{\text{in}}$ by starting with $N$ single-photons and 
keeping each one with probability $\mu=\tau^D$, while the rest are transformed to vacuum inputs.
We then proceed with the tensor network simulation of $U$ using lemma \ref{lem:TN}, where we just need to change the input tensor accordingly to the random sequence of (surviving) input single-photons. This leads to a quasi-polynomial time algorithm with a running time $\text{O}\left(M^{\frac{\gamma^2}{2\log\frac{1}{\epsilon}}\log M}\right)$.
\end{proof}

\section{Additional results}
\label{sec:extra}
In this section we extend the previous results to architectures suffering from algebraic decay of transmission and to other quantum optics alternatives to boson sampling.

\subsection{Algebraic decay of transmission}
\label{subsec:algdecay}

Not all optical architectures suffer from an exponential decay of the transmission, for example  free-space optics suffers from a
decay of transmission scaling as $1/D^2$. Suppose that a given architecture follows the following algebraic decay of losses
\begin{equation}
 \mu=\left(1+\frac{D}{d}\right)^{-\beta},
 \label{eq:algebraicdecay}
\end{equation}
where $d$ is a length scale that together with the parameter $\beta$ model the algebraic decrease of transmission.
Then theorem \ref{thm:classsimu} can be adapted to the following weaker form.
\begin{coro}\label{cor1}
The statistics of $N$ photons interfering over an $M$-mode linear optics planar circuit of depth $D$, with algebraic losses given by eq.~(\ref{eq:algebraicdecay}), and a relation between photons and modes given by eq.~(\ref{eq:photonscaling})
can be approximated with trace distance error $\epsilon$ in polynomial time when $D\geq D^*$, where
\begin{equation}
 D^*= d\left[\left(\frac{2k}{\epsilon}\right)^{\frac{1}{2\beta}}M^{\frac{\gamma}{2\beta}}-1\right].
 \label{eq:algcondition}
\end{equation}
\end{coro}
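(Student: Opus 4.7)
The plan is to reuse the proof of Theorem \ref{thm:classsimu} verbatim up to the point where the loss model is inserted. As in equation (\ref{eq:Dlimit}), the triangle inequality together with contractivity under measurement yields
\begin{equation*}
D(p_{\text{in}}(\bar{n}),\tilde{p}_{\text{T}}(\bar{n}))\leq D(\rho_{\text{in}},\rho_{\text{T}})+D(\tilde{p}_{\text{T}}(\bar{n}),p_{\text{T}}(\bar{n})),
\end{equation*}
and Theorem \ref{th:losses} bounds the first term by $\epsilon/2$ as long as $\mu\leq \sqrt{\epsilon/(2N)}$, while Lemma \ref{lem:classicalalgo} bounds the second term by $\epsilon/2$ in polynomial time for the same $\mu$. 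This part is completely oblivious to how $\mu$ depends on $D$, so nothing in the argument breaks when we move from exponential to algebraic decay.

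The next step is the purely algebraic inversion that converts the condition $\mu\leq\sqrt{\epsilon/(2N)}$ into a threshold on $D$. Substituting the algebraic decay law $\mu=(1+D/d)^{-\beta}$ from equation (\ref{eq:algebraicdecay}) gives $(1+D/d)^{\beta}\geq\sqrt{2N/\epsilon}$, hence $1+D/d\geq(2N/\epsilon)^{1/(2\beta)}$. Inserting the photon–mode relation $N=kM^{\gamma}$ from equation (\ref{eq:photonscaling}) and isolating $D$ immediately produces the stated threshold in equation (\ref{eq:algcondition}).

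I expect no genuine obstacle: the whole corollary is essentially the observation that Theorem \ref{thm:classsimu} factors through a single scalar inequality on $\mu$, so replacing the transmission law only changes how that inequality is inverted. The one thing I would be careful about is the bookkeeping of the $1/2$ factor when splitting $\epsilon$ between the thermalisation step and the simulation step, which is exactly what produces the $2k/\epsilon$ (rather than $k/\epsilon$) inside the root in equation (\ref{eq:algcondition}). The reason the corollary is labelled as a weaker form of Theorem \ref{thm:classsimu} is also transparent from this derivation: algebraic inversion turns the logarithmic threshold $D^{*}=\text{O}(\log M)$ of the exponential case into a polynomial threshold $D^{*}=\text{O}(M^{\gamma/(2\beta)})$, so the simulation guarantee only kicks in for much deeper circuits. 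No analogue of the tensor-network half of Theorem \ref{th:expdecay} needs to be invoked, since the statement of the corollary only claims polynomial-time simulation in the deep regime $D\geq D^{*}$.
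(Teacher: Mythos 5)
Your derivation is correct and is precisely the adaptation the paper intends: the paper states the corollary without an explicit proof, relying on exactly your observation that Theorem \ref{thm:classsimu} reduces to the scalar condition $\mu\leq\sqrt{\epsilon/(2N)}$, which you then invert through the algebraic decay law to recover equation (\ref{eq:algcondition}), including the factor $2k/\epsilon$ from the $\epsilon/2$ splitting. Your closing remarks on why the threshold becomes polynomial in $M$ and why no tensor-network counterpart is claimed also match the paper's discussion following the corollary.
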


It is not difficult to check that when the condition $\gamma/\beta<2$ is satisfied, 
there always exist an $M^*$ such that
the condition $D^*\leq M-1$ is satisfied for all $M\geq M^*$. This shows that any boson sampling experiment, which needs a depth $D=M$, 
on an architecture causing algebraic decay of optical transmission satisfying $\gamma/\beta<2$ will be classically simulatable by an approximation by thermal noise sampling for all $M\geq M^*$.

\subsection{Generalization to alternative boson sampling proposals}
\label{subsec:BSalternatives}

Scattershot boson sampling was presented in \cite{Lund14} to circumvent the main
problem of the non-deterministic nature of state-of-the-art photon sources, where the probability of firing
$N$ photon at the same time decays exponentially. The protocol starts by generating $M$ two-mode squeezed vacuum
states,
\begin{equation}
 \ket{\psi}=\sqrt{1-\lambda}\sum_{n=0}^\infty \lambda^{n/2}\ket{n}\ket{n},
 \label{eq:tmsvs}
\end{equation}
where $\lambda$ is the same parameter as in the definition of a thermal state, 
as a two-mode squeezed vacuum state is its purification.
Then we send half of the two-mode squeezed vacuum states through a boson sampling circuit, while the remaining modes are used to herald
the presence of photons. By properly tuning the squeezing parameter $\lambda$ one can guarantee that most of the heralded
sequences are collision-free, i.e., satisfy the birthday-paradox condition. The price to pay is that the modes where the photons
enter the circuit are completely randomized, which is not a problem for boson sampling as the circuit is anyway randomly generated according to the Haar random distribution.
Because right after the heralding process the setup is strictly equivalent to a traditional boson sampling device,
up to the randomization of the modes where the single-photons enter the interferometer, both of our simulation algorithms
(thermal state sampling and tensor network simulation) can be trivially adapted.
We only need to randomly generate
valid heralding sequences following the distribution given by eq.~(\ref{eq:tmsvs}) and depending on the obtained heralded value
we run the boson sampling simulations presented in subsection \ref{subsec:classicalsimu} and
detailed in section \ref{sec:thermalstates}. The only difference is that now the
input photons enter the interferometer on a random selection of $N$ input modes.

More recently, a variant of boson sampling, where photon detectors are replaced by a Gaussian measurement, has been proposed \cite{Lund17,Chak17}.
Because quantum operations and any measurement can only decrease the trace distance, the outcome statistics 
of this alternative proposals will also remain $\epsilon$-close. The evolved thermal state being Gaussian,
we can extend our result to this scenario by using well-know techniques of simulating Gaussian measurement over Gaussian states \cite{Veit12,Mari12}.

%%%%%%%%%%%%%%%%%%%%%%%%%%%%%%%%%%%%%%%%%%%%%%%%%%%%% Lemma 3
\section{Proof of Lemma \ref{lem:classicalalgo}}
\label{sec:thermalstates}

An idealized algorithm for simulating the photo-counting of a set of interfering thermal states is composed of three steps.
Firstly, any thermal states $\rho_{\text{T}}$ has a Glauber-Sudarshan $P$-representation as a mixture of $N$-mode coherent states
$\ket{\bm{\alpha}}\equiv\ket{\alpha_1, ..., \alpha_N}=\bigotimes_{i=1}^N \ket{\alpha_i}$ according to a Gaussian distribution
\begin{equation}
\rho_{\text{T}}=\int_{\mathbb{C}^N}d\bm{\alpha}p(\bm{\alpha})\proj{\bm{\alpha}}\otimes\proj{\mathbf{0}}^{M-N},
\label{eq:Pfunction}
\end{equation}
where
\begin{equation}
p(\bm{\alpha})=\prod_{i=1}^N\left[\frac{d^2\alpha_i }{\pi z}\,
\exp\left(-\frac{|\alpha_i|^2}{z}\right)\right].
\end{equation}
Secondly, a linear-optical circuit characterized by a unitary matrix $U$ transforms a tensor product of coherent states
$\bigotimes_{i=1}^M \ket{\alpha_i}$,
into another tensor product of coherent states $\bigotimes_{i=1}^M \ket{\beta_i}$ with amplitudes
\begin{equation}\label{5}
\beta_i=\sum_{j=1}^M U_{ij} \alpha_j.
\end{equation}
In other words, coherent states remain in a tensor product form while evolved through a linear optical circuit.
Thirdly, coherent states follow a Poisson photon number distribution
\begin{equation}
 P(n_i,\beta_i)=e^{-|\beta_i|^2}\frac{|\beta_i|^{2n_i}}{n_i!}.
 \label{eq:Poisson}
\end{equation}
Therefore, a concatenation of three stochastic processes 
simulates the photo-counting of a set of interfering thermal states. 
The first process generates a complex vector $\bm{\alpha}$ following the probability distribution $p(\bm{\alpha})$. The second one  applies
the map $\mathcal{U}$ to the vector $\bm{\alpha}$ generating the output $\bm{\beta}=U\bm{\alpha}$. The third process
 $\mathcal{P}$ generates an $M$-dimensional vector $\bar{n}$ from $\bm{\beta}$ by sampling from
the $M$-dimensional Poisson distribution  $p(\bar{n},\bm{\beta})$, where averaging over $\bm{\beta}$ gives
\begin{equation}
   p_{\text{T}}(\bar{n})=\int d\bm{\beta}p(\bm{\beta})p(\bar{n},\bm{\beta})
\end{equation}
This algorithm is an idealized one, as it  assumes access to oracles that sample exactly from Gaussian and Poisson distributions.

In order to build a realistic algorithm we define a new three step process, where the sampling oracles are replaced by efficient approximation algorithms.
The first step, detailed in subsection \ref{subsec:constellation}, consist of sampling from a $N$-mode constellation $\mathcal{C}_{N,\lambda, m^2}(\bm{\alpha})$ 
composed of $m^2N$ coherent states of distribution $\tilde{p}(\bm{\alpha})$, satisfying
\begin{equation}
    \sum_{i=1}^{m^2N}\tilde{p}(\bm{\alpha}_i)\proj{\bm{\alpha}_i}=\tilde{\rho}_{\text{T}},
    \label{eq:constellation}
\end{equation}
which efficiently approximates the $N$-mode thermal state $\rho_{\text{T}}$ \cite{Lacerda17}.
The $N$-mode constellation is composed of 
$N$ identical single-mode constellations  $\mathcal{C}_{\lambda, m^2}(\alpha)$ of size $m^2$
providing each a good approximation of the single-mode thermal state $\sigma_{\text{th}}$.
The second step $\mathcal{\tilde{U}}$ implements an approximation of matrix multiplication,
discussed in subsection \ref{subsec:matrixmult}, mapping $\bm{\alpha}$ to $\bm{\beta}$,
transforming the constellation $\mathcal{C}_{N,\lambda, m^2}(\bm{\alpha})$ into $\mathcal{C}_{N,\lambda, m^2}(\bm{\beta})$.
Finally the third step $\mathcal{\tilde{P}}$ generates $\tilde{p}(\bar{n},\bm{\beta})$, an
approximation of an $M$-dimensional Poisson distributions (\ref{eq:Poisson}) satisfying
\begin{equation}
    \sum_j^{m^2 N}\tilde{p}(\bar{n},\bm{\beta}_j)=\tilde{p}_{T}(\bar{n}).
\end{equation}
To approximate the Poisson distribution we use a scalable number of Bernouilli trials, 
as detailed in subsection \ref{subsec:Poisson} following \cite{Barb84}. 

\subsection{Error analysis}
We want to show that the trace distance between the ideal and approximate algorithms, above described,  satisfies $D(p_{\text{T}}(\bar{n}),\tilde{p}_{\text{T}}(\bar{n})\leq \epsilon$ while the algorithm 
remaining polynomial-time. It is easy to see from the definition of 
$\mathcal{P}$, $\mathcal{\tilde{P}}$, $\mathcal{U}$, $\mathcal{\tilde{U}}$ that
\begin{equation}
     \|p_{\text{T}}(\bar{n})-\tilde{p}_{\text{T}}(\bar{n}) \|=
     \|\mathcal{P}\circ\mathcal{U}\left(\rho_{\text{T}}\right)-
    \mathcal{\tilde{P}}\circ\mathcal{\tilde{U}}\left(\tilde{\rho}_{\text{T}}\right) \|.
\end{equation}

One can rewrite it as the following norm of a linear combination of terms,
\begin{widetext}
 \begin{equation}
 \|p_{\text{T}}(\bar{n})-\tilde{p}_{\text{T}}(\bar{n})\|=
   \|\mathcal{P}\circ\mathcal{U}\left(\rho_{\text{T}}\right)-\mathcal{P}\circ\mathcal{U}\left(\tilde{\rho}_{\text{T}}\right)
  +\mathcal{P}\circ\mathcal{U}\left(\tilde{\rho}_{\text{T}}\right)-\mathcal{P}\circ\mathcal{\tilde{U}}\left(\tilde{\rho}_{\text{T}}\right)
  +\mathcal{P}\circ\mathcal{\tilde{U}}\left(\tilde{\rho}_{\text{T}}\right)-\mathcal{\tilde{P}}\circ\mathcal{\tilde{U}}\left(\tilde{\rho}_{\text{T}}\right) \|
\end{equation}
which using the triangle inequality and the fact that a trace preserving map, such as $\mathcal{U},\mathcal{P},\tilde{\mathcal{U}},\tilde{\mathcal{P}}$ can only decrease the trace distance, we obtain the upper-bound
\begin{equation}
  \|p_{\text{T}}(\bar{n})-\tilde{p}_{\text{T}}(\bar{n})\|\leq
   \|\rho_{\text{T}}-\tilde{\rho}_{\text{T}} \|+
   \|\mathcal{U}\left(\tilde{\rho}_{\text{T}}\right)-\mathcal{\tilde{U}}\left(\tilde{\rho}_{\text{T}}\right) \|
  + \|\mathcal{P}\circ\mathcal{\tilde{U}}\left(\tilde{\rho}_{\text{T}}\right)-\mathcal{\tilde{P}}\circ\mathcal{\tilde{U}}\left(\tilde{\rho}_{\text{T}}\right) \|,
\end{equation}
that can be simplified using the definition of the constellation $\mathcal{C}_{N,\lambda,m}(\bm{\alpha})$ and $\mathcal{C}_{N,\lambda,m}(\bm{\beta})$, resulting in
\begin{equation}
 \|p_{\text{T}}(\bar{n})-\tilde{p}_{\text{T}}(\bar{n})\|\leq
  \|\rho_{\text{T}}-\tilde{\rho}_{\text{T}}\|+
  \max_{\mathcal{C}(\bm{\alpha})}\|\mathcal{U}\left(\proj{\bm{\alpha}}\right)-\mathcal{\tilde{U}}\left(\proj{\bm{\alpha}}\right)\|
  +\max_{\mathcal{C}(\bm{\beta})}\|\mathcal{P}\left(\proj{\bm{\beta}}\right)-\mathcal{\tilde{P}}\left(\proj{\bm{\beta}}\right)\|,\label{eq:errorbound}
\end{equation}
\end{widetext}
where $\mathcal{C}(\bm{\beta})=\mathcal{U}\left(\mathcal{C}(\bm{\alpha})\right)$.
In what follows we will show how one can build efficient algorithm for each step such that the last three terms on the r.h.s. of (\ref{eq:errorbound}) are bounded by $2\epsilon/3$, leading to 
$D(p_{\text{T}}(\bar{n}),\tilde{p}_{\text{T}}(\bar{n}))\leq\epsilon$.

\subsection{Efficient coherent states constellations}
\label{subsec:constellation}
We follow the derivation in \cite{Lacerda17} to bound the trace norm between a single-mode thermal state $\sigma_{\text{th}}$ and the single-mode coherent state constellation 
\begin{equation}
    \mathcal{C}_{\lambda,m}(\alpha)=\{P_{\lambda,m}(\alpha),\proj{\alpha}\}_{\alpha\in\mathbb{C}},
\end{equation}
which needs to have at least the same first and second moments. 
The constellation is described by the $P$-distribution $P_{\lambda,m}(\alpha)$
supported on $m^2$ points, such that 
\begin{equation}
    \frac{V}{2}P_{\lambda,m}
    \left[\sqrt{\frac{V}{2}}\left(x+iy\right)\right]
    =P_{X_m}(x)P_{X_m}(y),
\end{equation}
where $P_{X_m}(x)$ is one constellations  of the normal distribution $X\sim N(0,1)$ \cite{Wu2010}.
In terms of random variables, $\alpha_m=\sqrt{\frac{V}{2}}\left(X_m+i X_m'\right)$,
where $X_m$ and $X_m'$ are independent realizations of the given constellation. The factor
$\sqrt{V}$ ensures that the resulting $P$ function has the same variance
\begin{equation}
    V=\frac{\lambda}{1-\lambda}
    \label{eq:var}
\end{equation}
as the one corresponding to the thermal state $\sigma_{\text{th}}$, where the factor $1/\sqrt{2}$ takes care of the conversion from two real variables to one complex variable.
As shown in \cite{Temme10} the $\chi^2$-divergence bounds the trace norm as
\begin{equation}
    ||\rho-\sigma||^2\leq \chi^2(\rho,\sigma)=\text{Tr}\left[\left(\rho\sigma^{-1/2}\right)^2\right].
\end{equation}
As proven in the Appendix of \cite{Lacerda17}, the quantum $\chi^2$-divergence
$\chi^2(\sigma_{\text{th}},\mathcal{C}_{\lambda,m}(\alpha))$ satisfies the upper-bound 
\begin{equation}
    1+\chi^2(\sigma_{\text{th}},\mathcal{C}_{\lambda,m}(\alpha))=(1+\chi^2(P_{X_m},P_{X}))^2
\end{equation}

The Gauss-Hermite constellation $X_m$ of size $m$ of the normal distribution $X\sim N(0,1)$ is given by the $m$ roots of the $m^{\text{th}}$ Hermite polynomial, with weight $P_{X_m}$ selected to provide exact integration with respect to $P_{X_m}$ for all polynomials up to degree $2m-1$ \cite{Wu2010}. Then one can relate the $\chi^2$-divergence between the normal distribution and the Gauss-Hermite constellation $\chi^2(P_{X_m},P_{X})$ to the moments of the Hermite polynomials of $X_m$ \cite{Lacerda17,Wu2010}
\begin{equation}
    1+\chi^2(P_{X_m},P_{X})=\sum_{k=0}^\infty\frac{1}{k!}\left(\frac{s}{1+s}\right)^k
    |\mathbb{E}\left[H_k\left(X_m\right)\right]|^2
\end{equation}
where 
\begin{equation}
    s=\frac{V}{\sqrt{V(V+1)}-V}.
    \label{eq:s}
\end{equation}
Because $\mathbb{E}\left[H_k\left(X_m\right)\right]=0$ for odd $k$ and by definition of the Gauss-Hermite quadrature $\mathbb{E}\left[H_k\left(X_m\right)\right]=0$ for all $k\leq 2m-1$,
together with using equation (\ref{eq:s}) and definition (\ref{eq:var}) to obtain the equality 
\begin{equation}
  \frac{s}{1+s}=\sqrt{\lambda},
\end{equation}
we reach the simplification
\begin{equation}
    \chi^2(P_{X_m},P_{X})=\sum_{k\geq m}^\infty\frac{\lambda^k}{(2k)!}
   |\mathbb{E}\left[H_{2k}\left(X_m\right)\right]|^2.
\end{equation}
Finally, following the proof of Theorem 8 in \cite{Wu2010} we can show 
\begin{equation}
    \chi^2(P_{X_m},P_{X})\leq 2\kappa^2\sum_{k\geq m}^\infty\lambda^k=2\kappa^2\frac{\lambda^m}{1-\lambda},
\end{equation}
where $\kappa$ is a constant such that $2\kappa^2\approx2.36$.
This provides us with an upper-bound to the trace distance between the single-mode constellation 
and the single-mode thermal state
\begin{eqnarray}
    D\left(\sigma_{\text{th}},\mathcal{C}_{\lambda,m}(\alpha)\right)&\leq&
   \frac{1}{2} \left(\left[1+2\kappa^2\frac{\lambda^m}{1-\lambda}\right]^2- 1\right) \nonumber\\
    &&\leq 3\kappa^2\frac{\lambda^m}{1-\lambda}.
\end{eqnarray}
Using Lemma \ref{lem:Ncopies} one can trivially obtain an upper-bound between the $N$-mode constellation and thermal state from its single-mode version
\begin{equation}
    D\left(\rho_{\text{T}},\mathcal{C}_{N,\lambda,m}(\bm{\alpha})\right)\leq 3\kappa^2N\frac{\lambda^m}{1-\lambda}.
    \label{eq:constsize}
\end{equation}

\subsection{Matrix multiplication}
\label{subsec:matrixmult}
The transformation of $\bm{\beta}= U\bm{\alpha}$ can be approximated using standard numerical linear algebra within error $\epsilon$ in $\text{O}(MN)$ operations \cite{Knuth}. It is then easy, using the fidelity upper-bound of the trace norm, the well-know exponential decrease of the overlap between two coherent states, and the inequality $1-e^{-x}\leq x$ to derive the bound
\begin{widetext}
\begin{equation}
||\mathcal{U}\left(\proj{\bm{\alpha}}\right)-\mathcal{\tilde{U}}\left(\proj{\bm{\alpha}}\right)||
\leq 2\sqrt{1-F\left(\mathcal{U}\left(\proj{\bm{\alpha}}\right),\mathcal{\tilde{U}}\left(\proj{\bm{\alpha}}\right)\right)}
\leq 2\sqrt{1-e^{-|\bm{\tilde{\beta}}-\bm{\beta}|^2}}\leq2|\bm{\tilde{\beta}}-\bm{\beta}|\leq 2\epsilon,
\end{equation}
\end{widetext}
which shows that the error on the coherent sates is upper bounded by the error of the matrix multiplication, which can be made arbitrary small with a polynomial-time overhead. 

\subsection{Approximate Poisson sampling with Bernouilli trials}
\label{subsec:Poisson}
Starting from an $M$-dimensional vector $\bm{\beta}$  the map $\mathcal{P}(\bm{\beta})$  
outputs samples $\bar{n}$ from $M$ Poisson distributions, i.e., 
for each $\beta_l$  with $l = 1, \ldots, M$ 
%takes approximately samples from the Poisson distribution 
the Poisson distributions $P(n_l,|\beta_l|^2)$  (\ref{eq:Poisson}) is sampled
via independent Bernoulli trials.
To determine the number of Bernoulli trials $t$ to achieve a given error we can use the trace distance bound \cite{Barb84}
\begin{widetext}
\begin{equation}
 \frac{1}{2} \sum_{k=0}^\infty\Bigl| \binom{t}{k}p^k_B (1-p_B)^{t-k} - P(k,x)\Bigr| \le (1-e^{-x})\frac{x}{t}\leq\frac{x^2}{t}
 \label{eq:28}
\end{equation}
\end{widetext}
between the probability distribution of the  sum of $t$ independent Bernoulli trials, $S_t = \xi_1+ \ldots  + \xi_t$, with  $p_B(\xi = 1) = x/t$, $p_B(\xi = 0) = 1-x/t$
and the  Poisson distribution $P(n,x)$. This implies that the error of the $M$-dimensional output $\bar{n}$ 
will be bounded as
\begin{equation}
D\left(\mathcal{P}\left(\proj{\bm{\beta}}\right),\mathcal{\tilde{P}}\left(\proj{\bm{\beta}}\right)\right)
\leq \frac{M}{t}\max_{\bm{\beta}}|\bm{\beta}|^4.
\end{equation}

Because the unitary matrix $U$ preserves the norm of a vector, it is easy to show
\begin{equation}
    \max_{\bm{\beta}}|\bm{\beta}|\leq N\max|\alpha|\leq N\sqrt{2m},
\end{equation}
where we used the fact that $\bm{\alpha}$ results from $N$identical  single-mode constellations of coherent states and Krasikov's upper bound on the roots of Hermite polynomials \cite{Krasikov01}, which define the location of the coherent states of the constellation. Therefore, the number of necessary Bernoulli trials $t$ for simulation of a Poisson distribution with a trace distance error $2\epsilon/3$ reads
\begin{equation}
    t=6\frac{MN^2}{\epsilon}m^2.
    \label{eq:bernouilli}
\end{equation}

\subsection{Algorithm scaling}
The algorithm is composed of three steps. First, the random generation of a $N$-mode coherent state,
a task with a computational cost scaling as $O(N)$. Secondly, the matrix multiplication $\bm{\beta}=U\bm{\alpha}$, with a computational cost scaling as $O(MN)$. Finally, we need to approximate a Poisson distribution by sampling from $t$ Bernoulli trials, 
as specified in equation (\ref{eq:bernouilli}). Using equation (\ref{eq:constsize}) and 
that  $\lambda=\mu$ we obtain
\begin{equation}
    m=\frac{\log N+\log(1/\epsilon)+\log(1/(1-\mu))+2\log(3\kappa)}{\log(1/\mu)}.
    \label{eq:m-scaling}
\end{equation}
It is easy to see that a constant $\mu$ guarantees
a computational cost scaling as $O\left(\frac{MN^2}{\epsilon}\left[\log N+\log(1/\epsilon)\right]^2\right)$ in terms of the number of input photons $N$, the transmission $M$ and the quality of the approximation $\epsilon$. Observe that  by theorem 2, $\mu \le \sqrt{\epsilon/N}$, hence the right hand side on equation (\ref{eq:m-scaling}) is  regular (for $\epsilon/N \le 1-\delta$ for all $\delta >0$). 

%%%%%%%%%%%%%%%%%%%%%%%%%%%%%%%%%%%%%%%%%%%%%%%%%%%%%%%%%%%%%%%

\section{Proof of Lemma \ref{lem:TN}}
\label{sec:tensornetwork}

A quantum state of $M$ bosonic modes with at most $N$ total number of photons reads
\begin{equation}
\ket{\psi}=\sum_{\{\bar{n}:|\bar{n}|=N\}} C_{n_1,n_2,...,n_{M}}\ket{n_1,n_2,...,n_{M}}.
\end{equation}

The memory and computational cost of a brute-force simulation is associated with the number of degrees of freedom of the coefficient $C_{n_1,n_2,...,n_{M}}$, which correspond to the size of the Hilbert space. In our case it is given by the binomial $\binom{N+M-1}{N}$ that grows exponentially if both $N$ and $M$ increase proportionally to each other.
The idea behind tensor networks is to interpret $C_{n_1,n_2,...,n_{M}}$ as a big tensor with $M$ free indices. A tensor that can be recovered from the contraction of a network of tensors of smaller size, where the virtual degrees of freedom contract each other leaving $M$ free parameters corresponding to the physical indices $n_i$. This provides a very intuitive representation of quantum states and allows for a very efficient encoding and manipulation when the states have a high degree of locality~\cite{Wh92, OeRo95, VePoCi04, Orus14}.

\subsection{Matrix product states}

In our case we are interested in the evolution of a particular example of a tensor network called matrix product states,
\begin{equation}
\ket{\psi}=\sum_{n_1=0}^{d_1}\sum_{n_2=0}^{d_2}...\sum_{n_{M}=0}^{d_M}B_{n_1}^{[1]}B_{n_2}^{[2]}...B_{n_M}^{[M]}\ket{n_1,...,n_{M}},
\label{eq:MPS}
\end{equation}
where $B^{[1]}_{n_{1}}$ is a transposed vector of dimension $\tilde{\chi}_1$, $B^{[M]}_{n_{M}}$ is a vector of dimension $\tilde{\chi}_M$, and $B^{[i]}_{n_{i}}$ for $1 < i < M$ is a matrix of dimension $\tilde{\chi}_i\times\tilde{\chi}_{i+1}$. The physical indexes $n_i$ take values
$\{0,1,...,d\}$.

\begin{figure*}[!t!]
\centering
\includegraphics[width=0.9\linewidth]{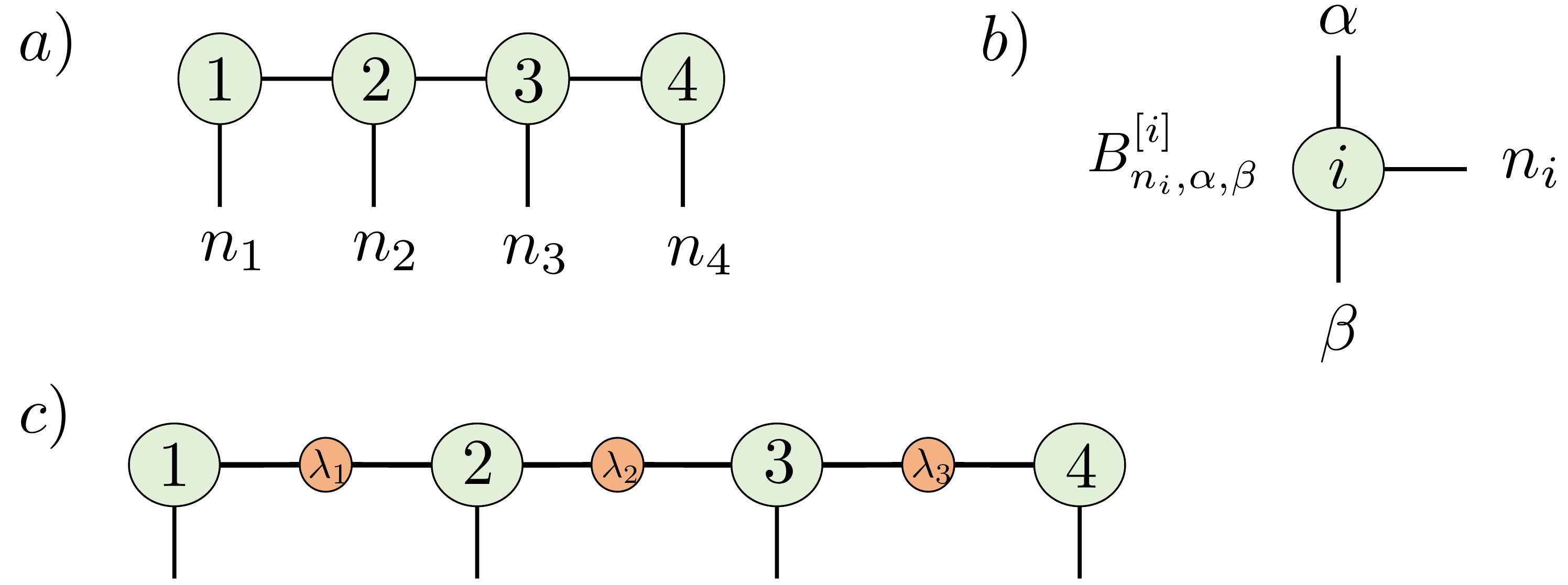}
\caption{\label{figure3}
(a) Graphical representation of a quantum state $|\psi\rangle$ corresponding to a matrix product state of four tensors, as defined in Eq.~\eqref{eq:MPS}.
(b) $B^{[i]}_{n_{i}, \alpha, \beta}$ is a tensor of rank three: represented by a vertex $i$ with three edges, one corresponding to the physical index $n_{i}$ and the two other to the virtual indexes
$\alpha$, and $\beta$.
(c) All matrix product states can be transformed into a canonical form where a diagonal matrix $\lambda$ of Schmidt coefficients is assigned to every edge between two vertices of (a).}
\end{figure*}
As shown in Figure \ref{figure3} (a),
one can associate to each matrix product state a 1-dimensional graph where each vertex is associated to a three index tensor
$B_{n_i,\alpha,\beta}^{[i]}$ (Figure \ref{figure3} (b)) and the edges determine the contraction rule of the tensor indices.

\subsubsection{Canonical form}
It is well known that any bipartite quantum state can be rewritten as
\begin{equation}
\ket{\psi}=\sum_{i=0}^{d_1}\sum_{j=0}^{d_2}c_{ij}\ket{ij}=\sum_{\alpha=0}^{\min{d_1,d_2}}\lambda_{\alpha}\ket{\varphi_\alpha}\ket{\psi_\alpha},
\end{equation}
where the Schmidt coefficients $\lambda_{\alpha}$ result from the singular value decomposition $c_{ij}=\sum_{\alpha}U_{i,\alpha}\lambda_{\alpha}\bar{V}_{j,\alpha}$.
Every matrix product state can be also transformed into a canonical form
\begin{widetext}
\begin{equation}\label{eq:canonical}
\ket{\psi}=\sum_{n_{1}, n_{2}, \ldots, n_{M}}
\left( \Gamma^{[1]}_{n_{1}} \lambda^{[1]} \Gamma^{[2]}_{n_{2}} \lambda^{[2]} \ldots \Gamma^{[N]}_{n_{M}} \right)
| n_{1}, n_{2}, \ldots, n_{M} \rangle
\end{equation}
\end{widetext}
by iteratively applying the singular value decomposition~\cite{Vi03, PeVeWoCi07}, with its  graph representation shown in Figure \ref{figure3} (c).
The matrices $\lambda^{[i]}$ are diagonal and contain the Schmidt coefficients corresponding to the bipartition of modes $(1,...,i)$ versus $(i+1,..,M)$. The Schmidt rank of each link reads $\chi_k$, and $\chi=\max_k\{\chi_k\}$ is called the bond dimension.
The total number of parameters and thus the storage cost for such a matrix product state scales as $\text{O}(M(d+1)\chi^{2})$.

\subsection{Simulating ideal linear optics circuits}

A linear optics circuit is composed of one-mode phase gates and two-mode couplers implementing an interaction between two adjacent modes.
In what follows we first explain how to update a matrix product state that goes under the evolution of linear optics gates and later discuss how to sample from the final output state.

\subsubsection{One mode gates (phase rotation)}
\label{subsec:1mode}

\begin{figure*}[!t!]
\centering
  \includegraphics[width=.9\linewidth]{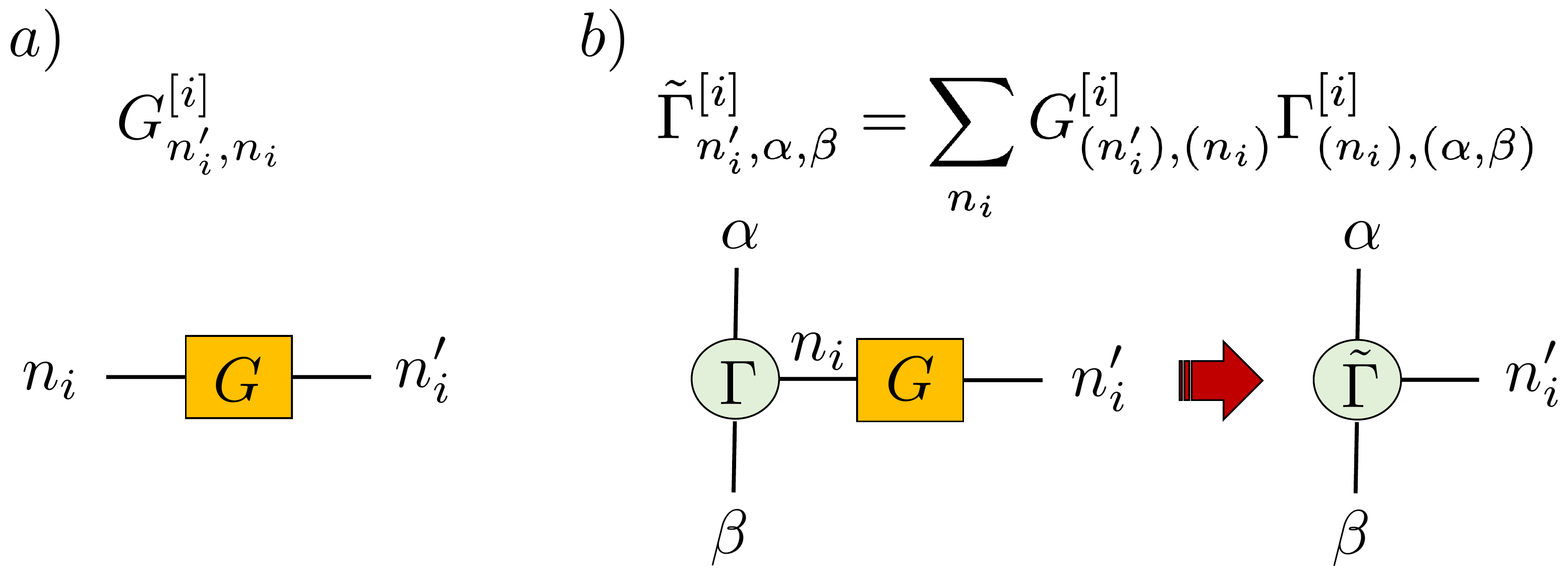}
\caption{a)A single-mode phase-rotation acting on mode $i$ is modeled by a matrix (tensor) $G^{[i]}_{n'_i,n_i}$ that transform the input physical indexes $n_i$
to the output physical indexes $n'_i$; b) The tensor $\Gamma^{[i]}$ of virtual indexes $\alpha,\beta$ and physical index $n_i$ is transformed
to the tensor $\tilde{\Gamma}^{[i]}$ by implementing a tensor contraction between the tensor $\Gamma^{[i]}$ and the phase-rotation $G^{[i]}$}.
\label{fig:figure4}
\end{figure*}

As shown in Figure \ref{fig:figure4}, a single-mode gate acting on mode $i$ is modeled by a matrix $G^{[i]}_{n'_i,n_i}$
that transform the input physical indices $n_i$ to the output physical indices $n'_i$.
The evolution corresponds to the contraction of the physical indices $n_i$ of $\Gamma^{[i]}_{\alpha,n_i,\beta}$
and $G^{[i]}_{n'_i,n_i}$ as
\begin{equation}
\tilde{\Gamma}^{'i}_{\alpha,\beta,n'_i}=\sum_{n_i}G^{[i]}_{(n'_i),(n_i)}\Gamma^{[i]}_{(n_i),(\alpha,\beta)}.
\end{equation}
A phase rotation $\theta$ has a matrix $G^{[i]}$ that is diagonal with coefficients $G^{[i]}=\exp(i\theta n_i)$. Therefore, the computational cost of the update of a single local gate scales as $\text{O}\left((d+1)\chi^2\right)$.
Notice that applying a single-mode gate will not change the Schmidt coefficients of the matrix product state, as it acts only on the physical indexes of one vertex of the graph.

\subsubsection{Two-mode couplers}
\label{subsec:2mode}

\begin{figure*}[!t!]
\centering
  \includegraphics[width=.9\linewidth]{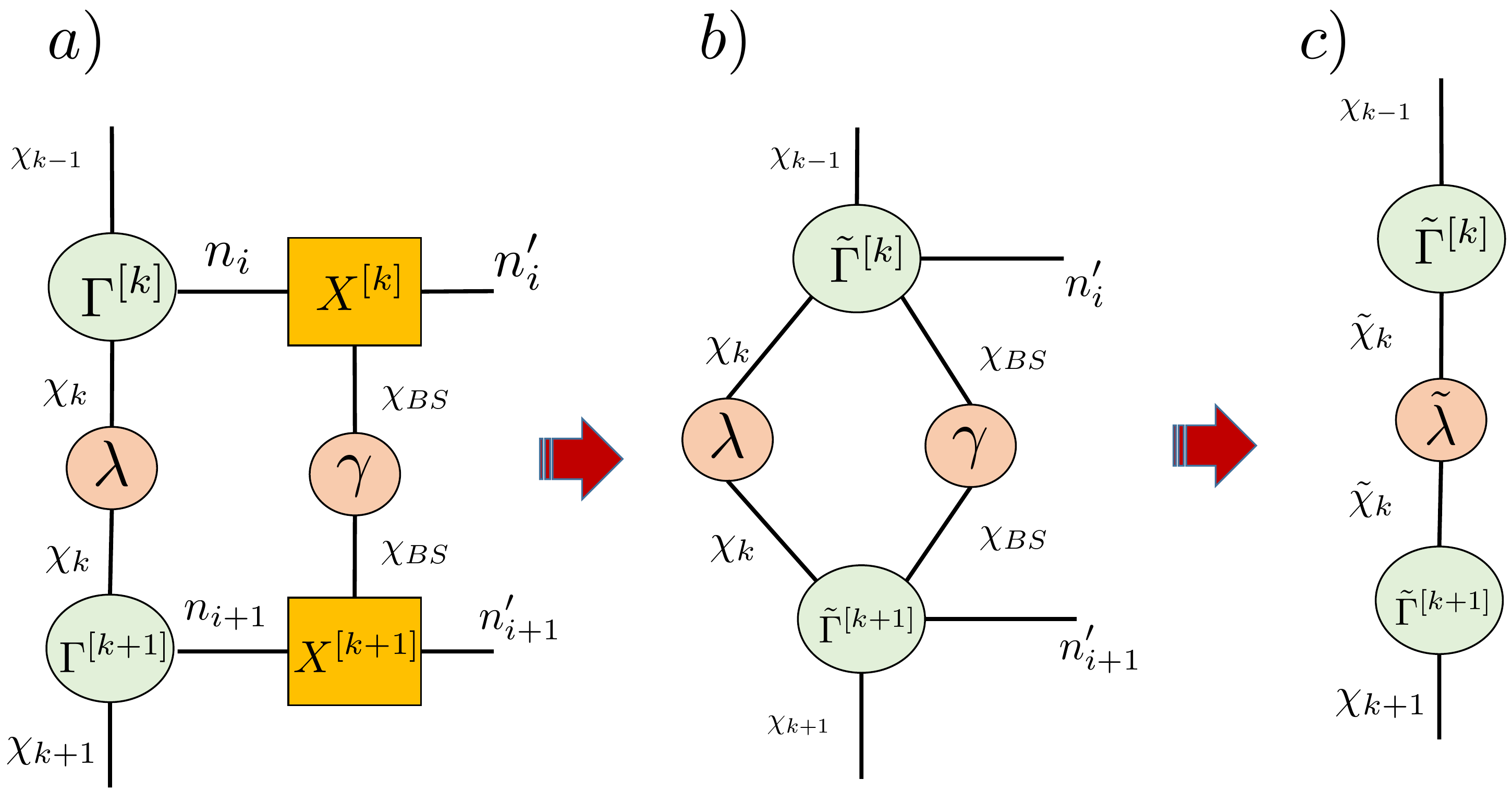}
\caption{a) The action of a two-mode coupler on modes $k$ and $k+1$ of a matrix product state written on its canonical form is first obtained by
doing a singular-value decomposition of the matrix product operator of the coupler;
b) Secondly, we contract the tensors $\Gamma^{[k]}$ and $X^{[k]}$ of mode $k$ and
$\Gamma^{[k+1]}$ and $X^{[k+1]}$ of mode $k+1$, giving $\tilde{\Gamma}^{[k]}$ and  $\tilde{\Gamma}^{[k+1]}$ respectively;
c) Finally, we relabel the two singular values $\lambda$ and $\gamma$ into a new label $\tilde{\lambda}$ of the resulting
matrix product state.}
\label{fig:figure5}
\end{figure*}
A two-mode coupler $B^{[k,k+1]}$ acting on modes $k$ and $k+1$ is modeled by a 4 legs tensor, i.e., a matrix product operator,
with physical  indexes $n_k$ and $n_{k+1}$ for the input and $n'_k$ and $n'_{k+1}$ for the output,
\begin{widetext}
\begin{equation}
B^{[k,k+1]}=\sum_{n_k,n_{k+1},n'_k,n'_{k+1}} C_{n_k,n_{k+1}}^{n'_k,n'_{k+1}}\ket{n'_k,n'_{k+1}}\bra{n_k,n_{k+1}},
\end{equation}
\end{widetext}
where the coefficients $C_{n_k,n_{k+1}}^{n'_k,n'_{k+1}}$ are the well-known input-output amplitudes of a beamsplitter
\cite{CaSaTe89,KiSoBuKn02} (see also equation~(3.9) in \cite{Aaronson2013}).

In \cite{Temme2012} an algorithm was constructed based on directly applying the unitary $B^{[k,k+1]}$ to the matrix product state
followed by a singular-value decomposition to rebuild the canonical form of the output state, reaching a 
computational cost scaling as $\text{O}(\chi^3d^3)$.
In what follows we present an alternative algorithm that provides a better scaling when the bond dimension $\chi$ is higher than the physical dimension $d$,
which is generally the case in most realistic simulations.

As shown in Figure \ref{fig:figure5} a),
in order to model the evolution of modes $k$ and $k+1$ under a two-mode coupler operation, we first implement a singular-value decomposition of the
matrix product operator $B^{[k,k+1]}$ with respect to the separation between $(n_k,n'_k)$ and $(n_{k+1},n'_{k+1})$ indexes, i.e.,
\begin{equation}
B^{[k,k+1]}_{(n_k,n'_k),(n_{k+1},n'_{k+1})}=\sum_{\gamma=1}^{\chi_{BS}}X^{[k]}_{n_k,n'_k,\gamma}\sigma^{[k]}_{\gamma}X^{[k+1]}_{n_{k+1},n'_{k+1},\gamma}
\end{equation}
where $\chi_{BS}$ is the Schmidt rank of the matrix product operator.
The Schmidt rank of a singular-value decomposition of a matrix being upper-bounded by the
largest of the two local dimensions provides the bound
\begin{equation}
 \chi_{BS}\leq (d+1)^2,
 \label{eq:upperBSchi}
\end{equation}
where the running time of the matrix product operator decomposition scales as $\text{O}\left((d+1)^6\right)$.

As shown in Figure \ref{fig:figure5} b), the next step is to contract the tensors $\Gamma^{[k]}$ and $X^{[k]}$ of mode $k$ and
$\Gamma^{[k+1]}$ and $X^{[k+1]}$ of mode $k+1$ in order to generate the tensors $\tilde{\Gamma}^{[k]}$ and  $\tilde{\Gamma}^{[k+1]}$ of the
state resulting after the beamsplitter transformation.

The running time of the contraction leading to the tensor $\tilde{\Gamma}^{[k]}$ scales as
$\chi_{k-1}\chi_{k}\chi_{BS}(d+1)^2$ where for
$\tilde{\Gamma}^{[k+1]}$ scales as $\chi_{k+1}\chi_{k}\chi_{BS}(d+1)^2$
which leads to a scaling of the contraction running time
\begin{equation}
T_C=\text{O}\left(\chi^2\chi_{BS}(d+1)^2\right)\leq \text{O}\left(\chi^2(d+1)^4\right).
\end{equation}

We remark that, as shown in Figure \ref{fig:figure5} c), the tensors $\tilde{\Gamma}^{[k]}$ and  $\tilde{\Gamma}^{[k+1]}$
are connected by two pairs of singular values,
$\chi_k$ from the initial state and $\chi_{BS}$ from the beamsplitter matrix product operator, which can be merged into
a single $\tilde{\chi}$ satisfying $\tilde{\chi}_k=\chi_k\chi_{BS}$, which
combined with equation (\ref{eq:upperBSchi}) provides the bound
\begin{equation}
\tilde{\chi}\leq\chi(d+1)^2,
\label{eq:scalingchi}
\end{equation}
which is the equivalent of Lemma 4 (i) in \cite{Jozsa06}.

\subsubsection{Circuit simulation}
The ideal boson sampling input state corresponds to a trivial matrix product state of bond dimension $\chi=1$,
composed of $N$ tensors $\Gamma^{[i]}_1=\delta_{n_i,1}\delta_{\alpha_{i-1},0}\delta_{\alpha_{i},0}$ encoding single-photon inputs
and $M-N$  tensors $\Gamma^{[i]}_0=\delta_{n_i,0}\delta_{\alpha_{i-1},0}\delta_{\alpha_{i},0}$ encoding vacuum inputs.
For every layer of couplers we apply in parallel the matrix product update detailed in subsections \ref{subsec:1mode} and \ref{subsec:2mode}. The bond dimension scales with the depth of the circuit $D$ as
$\text{O}\left((d+1)^{2D}\right)$, the storage space for the tensors as $\text{O}\left(M(d+1)^{4D+1}\right)$,
and the computational cost of the contraction of the matrix product state scales as $\text{O}\left(M(d+1)^{4(D+1)}\right)$,
where the leading order corresponds to the contractions of the last layer of gates.

\subsubsection{Sampling from a matrix product state}

Once the matrix product state resulting from $D$ layers of gates has been calculated, it is well known that one can  exploit  the chain rule
\begin{equation}
p(n_{1}, \ldots, n_{M})=p(n_{M} | n_{M-1}, \ldots, n_{1}) \ldots p(n_{1})
\end{equation}
to generates samples of $p(\bar{n})$. For completeness, we reproduce the explanation 
in \cite{Lubasch18} bellow. 

First calculate for each of the $d+1$ outcomes $n_1$ the probability 
${\rm Tr}\left[\proj{\psi}\proj{n_1}\otimes I_{2...M}\right]$,
where $\proj{n_1}$ is the projector onto the photon number state $n_1$ of mode 1 and $I_{2...M}$ is the identity operator on modes 2 to $M$. 
This is done by contraction of the matrix product state with itself, 
interleaved with a matrix product operator representing the measurement projector $\proj{n_1}$.
Then we randomly select one of the $d+1$ potential outcomes $n_1$ and
update our state by generating $|\psi_{\tilde{n}_{1}}\rangle := \langle \tilde{n}_{1}|\psi \rangle$, where the bra $\langle \tilde{n}_{1}|$
acts only on mode $1$.
The result of this contraction is a new, unnormalized matrix product state $|\psi_{\tilde{n}_{1}}\rangle$ of size $N-1$.
Note that this new matrix product state satisfies the condition $\langle \psi_{\tilde{n}_{1}} | \psi_{\tilde{n}_{1}}\rangle = p(\tilde{n}_{1})$.
The second step now uses the state $|\psi_{\tilde{n}_{1}}\rangle$.
Firstly, we calculate the $d+1$ outcome probabilities
$p(n_{2}, \tilde{n}_{1}) := \langle \psi_{\tilde{n}_{1}} | \left( |n_{2}\rangle \langle n_{2}| \otimes \mathcal{I}_{3, \ldots, N} \right) |\psi_{\tilde{n}_{1}} \rangle$
and randomly select a $\tilde{n}_{2}$ from the probability distribution $p(n_{2} | \tilde{n}_{1}) := p(n_{2}, \tilde{n}_{1}) / p(\tilde{n}_{1})$.
Secondly, we generate a new, unnormalized matrix product state $|\psi_{\tilde{n}_{1}, \tilde{n}_{2}}\rangle := \langle \tilde{n}_{2}|\psi_{\tilde{n}_{1}} \rangle$ of size $N-2$.
Continuing this procedure for the remaining $M-3$ output modes, we end up with one sample drawn according to the probability distribution $p(n_{1}, n_{2}, \ldots, n_{N})$.
The highest computational cost corresponds to the contraction leading to $p(n_{1}, n_{2}, \ldots, n_{N})$.
A trivial contraction algorithm provides a running time of $\text{O}\left(M\chi^4(d+1)^2\right)$, which for matrix product state resulting from $D$ layers of couplers becomes $\text{O}(M^2(d+1)^{8D+2})$.

\subsection{Simulating ideal logarithmic depth circuits}
It is important to notice that the bond dimension scales exponentially with the depth of the circuit.
If $d$ was a constant, such as in spin systems simulations, a shallow circuit satisfying a logarithmic depth constraint as in
eq.~(\ref{eq:circuitthreshold}) would lead to a polynomial time algorithm. In a tensor network simulation of quantum optics,
the potential bunching of photons, which can all potentially accumulate in a given mode, makes the simulation harder.
In order to obtain an exact simulation of the evolution and the sampling of $N$ input single photons over a circuit of depth $D$
we fixed the physical dimension over the whole evolution to $d=N$, the total number of photons.
Because $N$  scales with the number of modes $M$, see eq.~(\ref{eq:photonscaling}),
the computational cost of contraction, storage and sampling becomes quasipolynomial in the size of the system.

\section{Generalization to non-uniform losses}
\label{sec:nonuniformlosses}

In subsection \ref{subsec:modelloss} we have shown that a lossy linear optics interferometer is mathematically modeled by a complex linear transformation $A$ with singular value decomposition $A=V\hat{\mu} W$.
As shown in Figure \ref{fig:figure2}, this is equivalent to
a circuit composed by a lossless linear optics transformation $V$, followed by $M$ parallel set of different pure-loss channels of transmission $\mu_i$, and a final lossless linear optics transformation $W$.

\begin{figure}%[!h!]
\centering
  \includegraphics[width=.9\linewidth]{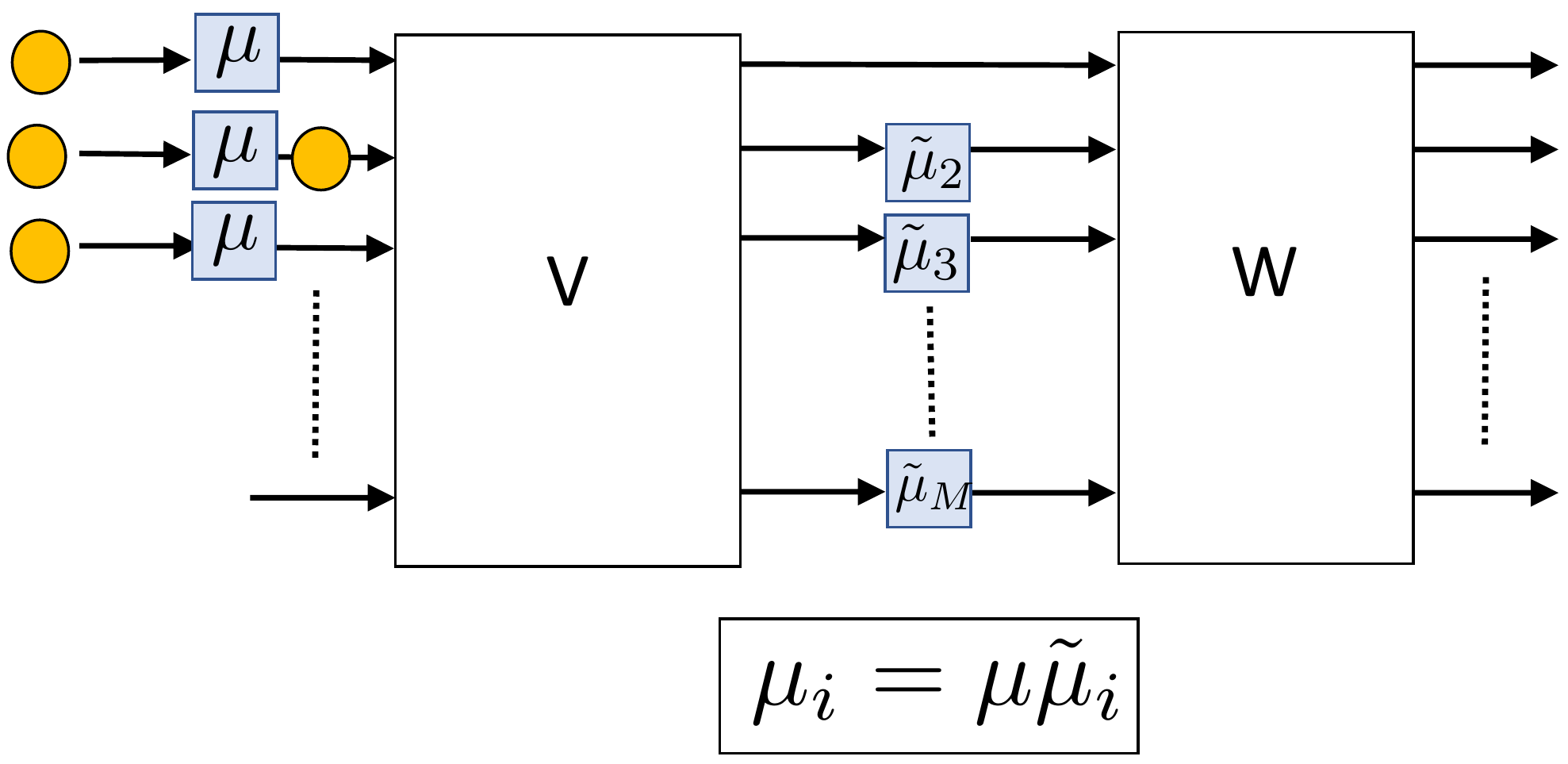}
\caption{A non-uniform lossy interferometer is equivalent to $M$ parallel set of pure-loss channels of transmission $\mu=\max{\mu_i}$
(where we have chosen $\mu=\mu_1$ without loss of generality) followed by the losseless circuit $V$, $M$ parallel set of pure-loss channels ($\tilde{\mu}_i$) and a final losseless circuit $W$.}
\label{fig:figure6}
\end{figure}

As shown in  Figure \ref{fig:figure6}, each of the $M$ parallel pure-loss channels of transmission $\mu_i$ can be decomposed into a concatenation a pure-loss channels of transmission $\mu=\max{\mu_i}$  followed by one of 
transmission $\tilde{\mu}_i=\mu_i/\mu$.
Because $M$ parallel pure-loss channels of transmission $\mu$ commute with the unitary $V$, we obtain a scheme
where $M$ parallel set of pure-loss channels of transmission $\mu$ are followed by the ideal circuit $V$, $M$ parallel different pure-loss channels ($\tilde{\mu}_i$) and a final ideal circuit $W$.
The state $\rho_{\text{in}}$ now results from applying $M$ parallel set of pure-loss channels of transmission $\mu=\max_i\mu_i$
and the thermal state $\rho_{\text{T}}$ is chosen such that $\lambda=\mu$, in the same way we did in subsection \ref{sec:mainresult}.
The only difference is that now the set of operations after the pure-loss channels of transmission $\mu$ is not longer a lossless interferometer
$U$ but a $M$-mode quantum channel $\mathcal{L}$ resulting from concatenating $V$, $M$ parallel pure-loss channels ($\tilde{\mu}_i$) and $W$.

\subsection{Generalizing Theorem \ref{th:losses}}
Because applying a quantum operation $\mathcal{L}$ can only make the trace-distance decrease, similarly as for a measurement,
it is trivial to see that one can generalize Theorem \ref{th:losses} by replacing the uniform losses $\mu$ by $\mu=\max{\mu_i}$.

\subsection{Generalizing lemma \ref{lem:classicalalgo}}
%Additionally, the thermal state sampling algorithm in section can be easily adapted. 
It is a well-known fact in quantum optics
that the action of a pure loss channel of transmission $\mu_i$ on a coherent state $\ket{\alpha}$ outputs a weaker coherent state
$\ket{\mu_i\alpha}$. Therefore, the evolution of an input multimode coherent state $\ket{\bm{\alpha}}$ can be easily computed,
by implementing the matrix multiplication $\bm{\beta}=A\bm{\alpha}$.
Once the output coherent states have been determined, the sampling from their respective Poisson distributions proceeds as before.

\subsection{Generalizing lemma \ref{lem:TN}}
The adaptation of the tensor network simulation is slightly more involved. Let's use the notation $A_{i,j}$
for the coupler acting on modes $i$ and $i+1$ at the layer of couplers $j$. Every $A_{i,j}$ has a decomposition into
a unitary $V_{i,j}$ followed by two independent pure-loss channels and a final unitary $W_{i,j}$. 
Because every pure-loss channel can be seen
as a lossless coupling interaction with an environmental mode, it is easy to see that a circuit with losses can be transformed into
an ideal lossless circuit by doubling the number of couplers and adding two ancillary modes per coupler with losses.
We can then place all the ancillary modes interacting with mode $i$ between input modes $i$ and $i+1$, i.e.,
$D$ of them bellow each input mode for a circuit of depth $D$.
For a lossy circuit of depth $D$ there is at most $3D$ 
lossless gates acting on each mode with a range of at most $D$.
As detailed in \cite{Jozsa06}, one can transform a $D$ range gate into $2D$ nearest-neighbor gates.
Therefore, our initial circuit with losses of depth $D=M$ becomes a 
lossless circuit with $M^2$ modes and $6M^2$
nearest-neighbor gates. This leads to a less favorable scaling of
the computational cost of contraction, storage and sampling, but which remains quasipolynomial in $M$.
This last algorithm is certainly not optimal and we are convinced that more elaborate choices can certainly improve the simulation of multi-photon interference with non-uniform losses.

\section{Conclusion}
\label{sec:conclusion}

The vast majority of currently proposed boson sampling architectures
suffer from exponential decay of the transmission with the length of the circuit.
We have shown that multi-photon interference over an $M$ modes interferometer of depth $D$
can be efficiently simulated classically.
More precisely, we have show that either the depth of the circuit is large enough 
($D\geq \text{O}(\log M)$) that it can be simulated by thermal noise with an algorithm running in polynomial time, or the depth of the circuit is shallow enough ($D\leq \text{O}(\log M)$) that a tensor network simulation runs in quasi-polynomial time.

We also showed that for even very optimistic experimental parameters,
a quantum supremacy experiment using the current boson sampling hardness proof is out of reach.
We believe that our result suggests that in order to implement a quantum advantage experiment with single-photons and linear optics we need novel theoretical ideas or radically new technological developments. One possibility would be to shift to platforms with very low algebraic 
transmission where our result would not be applicable. 
Another option would be prove the hardness of novel boson sampling architectures beyond the planar circuit architecture. A promising route would be to reduce the depth of the circuit to the shallow regime while maintaining the complexity by moving to a lattice structure.
A potential candidate would be an adaptation to quantum optics of the recent proposal of  quenching a spin lattice \cite{Bermejo2017}.

We discussed that the potential bunching of photons makes the tensor network simulation of quantum optical systems more involved than its finite spin counterparts.
One could potentially restore the polynomial scaling of logarithmic depth circuits observed for finite systems by designing an $\epsilon$-approximate algorithm that truncates every mode to a finite size. To our knowledge, this is a non-trivial result that is certainly worth pursuing in future research.

One of the motivations of our work was to show that simulating boson sampling with imperfections is indeed easier than ideal boson sampling. We achieved this goal for a restricted regime of losses where the system becomes classically simulatable. We conjecture that there should exist a family of algorithms that optimally interpolate between the Clifford and Clifford algorithm for ideal devices and fully efficient algorithms for noisy devices. 
This work is only a first step in this direction and we believe that further results will
improve even further the classical simulation of imperfect boson sampling devices. 

An interesting open question is whether our proof can be adapted to other technological platforms candidates to a quantum advantage test.

It has been recently proposed to use post-selection to improve boson sampling rates while maintaining its hardness \cite{Wang18}. The proposal, inspired by the fact that the hardness of boson sampling is preserved when only a constant number of photons is lost \cite{Aaronson2016}, proposes to post-select outcome events with a constant number of lost photons. If the probability of a successful
post-selection decreases exponentially with the increasing size of the system, 
it was argued that it could be a useful tool for experiment of the limited size needed for a practical quantum advantage demonstrations. If the proof presented in this manuscript can not be used to discard this novel approach to boson sampling, a recent work by the same authors \cite{Renema18} imposes strong constrains on its viability.

After the completion of this article we learned about Ref. \cite{Oszmaniec18}
that obtains a similar result to our Theorem \ref{th:losses} and its generalization to non-uniform losses using very different techniques.

\section{Acknowledgements}
We would like to thank Anthony Leverrier for fruitful discussions on the constellations of coherent states.
R.G.-P. is Research Associate of the F.R.S.-FNRS. J.J.R. acknowledges NWO Rubicon.
V.S.  was supported by the National Council for Scientific and Technological Development (CNPq) of Brazil,
grant  304129/2015-1, and by  the S{\~a}o Paulo Research Foundation   (FAPESP), grant 2015/23296-8.
R.G.-P. and J.J.R. acknowledge funding from the Fondation Wiener-Anspach.
This project has received funding from the European Union’s Horizon 2020 research and innovation program under grant agreement No 641039 (QUCHIP).


\begin{thebibliography}{57}
\providecommand{\natexlab}[1]{#1}
\providecommand{\url}[1]{\texttt{#1}}
\expandafter\ifx\csname urlstyle\endcsname\relax
  \providecommand{\doi}[1]{doi: #1}\else
  \providecommand{\doi}{doi: \begingroup \urlstyle{rm}\Url}\fi

\bibitem[Aaronson(2011)]{Aaro11}
S.~Aaronson.
\newblock A linear-optical proof that the permanent is \#{P}-hard.
\newblock \emph{Proceedings of the Royal Society A: Mathematical, Physical and
  Engineering Sciences}, 467:\penalty0 3393--3405, 2011.
\newblock \doi{10.1098/rspa.2011.0232}.

\bibitem[Aaronson and Arkhipov(2013)]{Aaronson2013}
S.~Aaronson and A.~Arkhipov.
\newblock The computational complexity of linear optics.
\newblock \emph{Theory of Computing}, 9:\penalty0 143--252, 2013.
\newblock \doi{10.4086/toc.2013.v009a004}.

\bibitem[Aaronson and Brod(2016)]{Aaronson2016}
S.~Aaronson and D.~J. Brod.
\newblock Boson sampling with lost photons.
\newblock \emph{Phys. Rev. A}, 93:\penalty0 012335, 2016.
\newblock \doi{10.1103/PhysRevA.93.012335}.

\bibitem[Arkhipov and Kuperberg(2012)]{Arkhipov2012}
A.~Arkhipov and G.~Kuperberg.
\newblock The bosonic birthday paradox.
\newblock \emph{Geometry \& Topology Monographs}, 18:\penalty0 1, 2012.
\newblock \doi{10.2140/gtm.2012.18.1}.

\bibitem[Barbour and Hall(1984)]{Barb84}
A.~D. Barbour and P.~Hall.
\newblock On the rate of poisson convergence.
\newblock \emph{Mathematical Proceedings of the Cambridge Philosophical
  Society}, 95:\penalty0 473--480, 1984.
\newblock \doi{10.1017/S0305004100061806}.

\bibitem[Bermejo-Vega et~al.(2018)Bermejo-Vega, Hangleiter, Schwarz,
  Raussendorf, and Eisert]{Bermejo2017}
J.~Bermejo-Vega, D.~Hangleiter, M.~Schwarz, R.~Raussendorf, and J.~Eisert.
\newblock Architectures for quantum simulation showing a quantum speedup.
\newblock \emph{Phys. Rev. X}, 8:\penalty0 021010, 2018.
\newblock \doi{10.1103/PhysRevX.8.021010}.

\bibitem[Broome et~al.(2013)Broome, Fedrizzi, Rahimi-Keshari, Dove, Aaronson,
  Ralph, and White]{Broome2013}
M.~A. Broome, A.~Fedrizzi, S.~Rahimi-Keshari, J.~Dove, S.~Aaronson, T.~C.
  Ralph, and A.~G. White.
\newblock Photonic boson sampling in a tunable circuit.
\newblock \emph{Science}, 339:\penalty0 794--798, 2013.
\newblock \doi{10.1126/science.1231440}.

\bibitem[Caianiello(1953)]{Caia53}
E.~R. Caianiello.
\newblock On quantum field theory i: explicit solution of dyson's equation in
  electrodynamics without use of feynman graphs.
\newblock \emph{Il Nuovo Cimento}, 10:\penalty0 1634--1652, 1953.
\newblock \doi{10.1007/BF02781659}.

\bibitem[Campos et~al.(1989)Campos, Saleh, and Teich]{CaSaTe89}
R.~A. Campos, B.~E.~A. Saleh, and M.~C. Teich.
\newblock Quantum-mechanical lossless beam splitter: {SU}(2) symmetry and
  photon statistics.
\newblock \emph{Phys. Rev. A}, 40:\penalty0 1371--1384, 1989.
\newblock \doi{10.1103/PhysRevA.40.1371}.

\bibitem[Carolan et~al.(2015)Carolan, Harrold, Sparrow, Mart{\'\i}n-L{\'o}pez,
  Russell, Silverstone, Shadbolt, Matsuda, Oguma, Itoh, Marshall, Thompson,
  Matthews, Hashimoto, O{\textquoteright}Brien, and Laing]{Carolan2015}
J.~Carolan, C.~Harrold, C.~Sparrow, E.~Mart{\'\i}n-L{\'o}pez, N.~J. Russell,
  J.~W. Silverstone, P.~J. Shadbolt, N.~Matsuda, M.~Oguma, M.~Itoh, G.~D.
  Marshall, M.~G. Thompson, J.~C.~F. Matthews, T.~Hashimoto, J.~L.
  O{\textquoteright}Brien, and A.~Laing.
\newblock Universal linear optics.
\newblock \emph{Science}, 349:\penalty0 711--716, 2015.
\newblock \doi{10.1126/science.aab3642}.

\bibitem[Chakhmakhchyan and Cerf(2017)]{Chak17}
L.~Chakhmakhchyan and N.~J. Cerf.
\newblock Boson sampling with gaussian measurements.
\newblock \emph{Phys. Rev. A}, 96:\penalty0 032326, 2017.
\newblock \doi{10.1103/PhysRevA.96.032326}.

\bibitem[Clements et~al.(2016)Clements, Humphreys, Metcalf, Kolthammer, and
  Walmsley]{Clements2016}
W.~R. Clements, P.~C. Humphreys, B.~J. Metcalf, W.~S. Kolthammer, and I.~A.
  Walmsley.
\newblock Optimal design for universal multiport interferometers.
\newblock \emph{Optica}, 3:\penalty0 1460--1465, 2016.
\newblock \doi{10.1364/OPTICA.3.001460}.

\bibitem[Clifford and Clifford(2018)]{Cliff17}
P.~Clifford and R.~Clifford.
\newblock The classical complexity of boson sampling.
\newblock In \emph{Proceedings of the Twenty-Ninth Annual ACM-SIAM Symposium on
  Discrete Algorithms}, SODA '18, pages 146--155, Philadelphia, PA, USA, 2018.
  Society for Industrial and Applied Mathematics.
\newblock ISBN 978-1-6119-7503-1.
\newblock URL \url{http://dl.acm.org/citation.cfm?id=3174304.3175276}.

\bibitem[Crespi et~al.(2013)Crespi, Osellame, Ramponi, Brod, Galvão, Spagnolo,
  Vitelli, Maiorino, Mataloni, and Sciarrino]{BS3Dwaveguide}
A.~Crespi, R.~Osellame, R.~Ramponi, D.~J. Brod, E.~F. Galvão, N.~Spagnolo,
  C.~Vitelli, E.~Maiorino, P.~Mataloni, and F.~Sciarrino.
\newblock Experimental boson sampling.
\newblock \emph{Nat. Photonics}, 7:\penalty0 540, 2013.
\newblock \doi{10.1038/nphoton.2013.112}.

\bibitem[Defienne et~al.(2016)Defienne, Barbieri, Walmsley, Smith, and
  Gigan]{Defi16}
H.~Defienne, M.~Barbieri, I.~A. Walmsley, B.~J. Smith, and S.~Gigan.
\newblock Two-photon quantum walk in a multimode fiber.
\newblock \emph{Science Advances}, 2, 2016.
\newblock \doi{10.1126/sciadv.1501054}.

\bibitem[E.~Knill and Millburn(2001)]{Knill2001}
R.~Laflamme E.~Knill and G.~J. Millburn.
\newblock A scheme for efficient quantum computation with linear optics.
\newblock \emph{Nature}, 409:\penalty0 46, 2001.
\newblock \doi{10.1038/35051009}.

\bibitem[Giovannetti et~al.(2015)Giovannetti, Holevo, and
  Garcia-Patron]{Giovannetti2015}
V.~Giovannetti, A.~S. Holevo, and R.~Garcia-Patron.
\newblock A solution of gaussian optimizer conjecture for quantum channels.
\newblock \emph{Communications in Mathematical Physics}, 334:\penalty0
  1553--1571, 2015.
\newblock \doi{10.1007/s00220-014-2150-6}.

\bibitem[Glauber(2007)]{Glauber}
R.~J. Glauber.
\newblock \emph{Quantum Theory of Optical Coherence: Selected Papers and
  Lectures}.
\newblock Wiley-VCH Verlag, Weinheim, Germany, 10th edition, 2007.
\newblock ISBN 9783527610075.
\newblock \doi{10.1002/9783527610075}.

\bibitem[Harris et~al.(2016)Harris, Bunandar, Pant, Mower, Prabhu, Baehr-Jones,
  Hochberg, and Englund]{Harris2016}
N.~C. Harris, D.~Bunandar, M.~Pant, G.~R. Steinbrecher~J. Mower, M.~Prabhu,
  T.~Baehr-Jones, M.~Hochberg, and D.~Englund.
\newblock Large-scale quantum photonic circuits in silicon.
\newblock \emph{Nanophotonics}, 5:\penalty0 456, 2016.
\newblock \doi{10.1515/nanoph-2015-0146}.

\bibitem[He et~al.(2017)He, Ding, Su, Huang, Qin, Wang, Unsleber, Chen, Wang,
  He, Wang, Zhang, Chen, Schneider, Kamp, You, Wang, H\"ofling, Lu, and
  Pan]{BSfiberloop}
Y.~He, X.~Ding, Z.-E. Su, H.-L. Huang, J.~Qin, C.~Wang, S.~Unsleber, C.~Chen,
  H.~Wang, Y.-M. He, X.-L. Wang, W.-J. Zhang, S.-J. Chen, C.~Schneider,
  M.~Kamp, L.-X. You, Z.~Wang, S.~H\"ofling, Chao-Yang Lu, and J.-W. Pan.
\newblock Time-bin-encoded boson sampling with a single-photon device.
\newblock \emph{Phys. Rev. Lett.}, 118:\penalty0 190501, 2017.
\newblock \doi{10.1103/PhysRevLett.118.190501}.

\bibitem[Heck et~al.(2014)Heck, Bauters, Davenport, Spencer, and
  Bowers]{Heck2014}
M.~J.~R. Heck, J.~F. Bauters, M.~L. Davenport, D.~T. Spencer, and J.~E. Bowers.
\newblock Ultra-low loss waveguide platform and its integration with silicon
  photonics.
\newblock \emph{Laser \& Photonics Reviews}, 8:\penalty0 667--686, 2014.
\newblock \doi{10.1002/lpor.201300183}.

\bibitem[Jozsa(2006)]{Jozsa06}
R.~Jozsa.
\newblock On the simulation of quantum circuits.
\newblock \emph{arXiv}, page 0603163, 2006.
\newblock URL \url{https://arxiv.org/abs/quant-ph/0603163}.

\bibitem[Kim et~al.(2002)Kim, Son, Bu\ifmmode~\check{z}\else \v{z}\fi{}ek, and
  Knight]{KiSoBuKn02}
M.~S. Kim, W.~Son, V.~Bu\ifmmode~\check{z}\else \v{z}\fi{}ek, and P.~L. Knight.
\newblock Entanglement by a beam splitter: Nonclassicality as a prerequisite
  for entanglement.
\newblock \emph{Phys. Rev. A}, 65:\penalty0 032323, 2002.
\newblock \doi{10.1103/PhysRevA.65.032323}.

\bibitem[Knuth(1997)]{Knuth}
D.~E. Knuth.
\newblock \emph{The Art of Computer Programming, Volume 1 (3rd Ed.):
  Fundamental Algorithms}.
\newblock Addison Wesley Longman Publishing Co., Inc., Redwood City, CA, USA,
  1997.
\newblock ISBN 0-201-89683-4.

\bibitem[Kok et~al.(2007)Kok, Munro, Nemoto, Ralph, Dowling, and
  Milburn]{Kok2007}
P.~Kok, W.~J. Munro, K.~Nemoto, T.~C. Ralph, J.~P. Dowling, and G.~J. Milburn.
\newblock Linear optical quantum computing with photonic qubits.
\newblock \emph{Rev. Mod. Phys.}, 79:\penalty0 135--174, 2007.
\newblock \doi{10.1103/RevModPhys.79.135}.

\bibitem[Krasikov(2001)]{Krasikov01}
I.~Krasikov.
\newblock Nonnegative quadratic forms and bounds on orthogonal polynomials.
\newblock \emph{J. Approx. Theory}, 111:\penalty0 31--49, 2001.
\newblock \doi{10.1006/jath.2001.3570}.

\bibitem[Lacerda et~al.(2017)Lacerda, Renes, and Scholz]{Lacerda17}
F.~Lacerda, J.~M. Renes, and V.~B. Scholz.
\newblock Coherent-state constellations and polar codes for thermal gaussian
  channels.
\newblock \emph{Phys. Rev. A}, 95:\penalty0 062343, 2017.
\newblock \doi{10.1103/PhysRevA.95.062343}.

\bibitem[L.G.Valiant(1979)]{Vali79}
L.G.Valiant.
\newblock The complexity of computing the permanent.
\newblock \emph{Theoretical Computer Science}, 8:\penalty0 189--201, 1979.
\newblock \doi{10.1016/0304-3975(79)90044-6}.

\bibitem[Lubasch et~al.(2018)Lubasch, Valido, Renema, Kolthammer, Jaksch, Kim,
  Walmsley, and Garc\'{\i}a-Patr\'on]{Lubasch18}
M.~Lubasch, A.~Valido, J.~Renema, W.~S. Kolthammer, D.~Jaksch, M.~S. Kim,
  I.~Walmsley, and R.~Garc\'{\i}a-Patr\'on.
\newblock Tensor network states in time-bin quantum optics.
\newblock \emph{Phys. Rev. A}, 97:\penalty0 062304, 2018.
\newblock \doi{10.1103/PhysRevA.97.062304}.

\bibitem[Lund et~al.(2014)Lund, Laing, Rahimi-Keshari, Rudolph, O'Brien, and
  Ralph]{Lund14}
A.~P. Lund, A.~Laing, S.~Rahimi-Keshari, T.~Rudolph, J.~L. O'Brien, and T.~C.
  Ralph.
\newblock Boson sampling from a gaussian state.
\newblock \emph{Phys. Rev. Lett.}, 113:\penalty0 100502, 2014.
\newblock \doi{10.1103/PhysRevLett.113.100502}.

\bibitem[Lund et~al.(2017)Lund, Rahimi-Keshari, and Ralph]{Lund17}
A.~P. Lund, S.~Rahimi-Keshari, and T.~C. Ralph.
\newblock Exact boson sampling using gaussian continuous-variable measurements.
\newblock \emph{Phys. Rev. A}, 96:\penalty0 022301, 2017.
\newblock \doi{10.1103/PhysRevA.96.022301}.

\bibitem[Mari and Eisert(2012)]{Mari12}
A.~Mari and J.~Eisert.
\newblock Positive wigner functions render classical simulation of quantum
  computation efficient.
\newblock \emph{Phys. Rev. Lett.}, 109:\penalty0 230503, 2012.
\newblock \doi{10.1103/PhysRevLett.109.230503}.

\bibitem[Neville et~al.(2017)Neville, Sparrow, Clifford, Johnston, Birchall,
  Montanaro, and Laing]{Nevi17}
A.~Neville, C.~Sparrow, R.~Clifford, E.~Johnston, P.~M. Birchall, A.~Montanaro,
  and A.~Laing.
\newblock Experimental boson sampling.
\newblock \emph{Nature Physics}, 13:\penalty0 1153, 2017.
\newblock \doi{10.1038/nphys4270}.

\bibitem[Nielsen and Chuang(2011)]{Nielsen10}
M.~A. Nielsen and I.~L. Chuang.
\newblock \emph{Quantum Computation and Quantum Information: 10th Anniversary
  Edition}.
\newblock Cambridge University Press, New York, NY, USA, 10th edition, 2011.
\newblock ISBN 1107002176, 9781107002173.
\newblock \doi{10.1017/CBO9780511976667}.

\bibitem[Orús(2014)]{Orus14}
R.~Orús.
\newblock A practical introduction to tensor networks: Matrix product states
  and projected entangled pair states.
\newblock \emph{Annals of Physics}, 349:\penalty0 117 -- 158, 2014.
\newblock \doi{10.1016/j.aop.2014.06.013}.

\bibitem[Ostlund and Rommer(1995)]{OeRo95}
S.~Ostlund and S.~Rommer.
\newblock Thermodynamic limit of density matrix renormalization.
\newblock \emph{Phys. Rev. Lett.}, 75:\penalty0 3537--3540, 1995.
\newblock \doi{10.1103/PhysRevLett.75.3537}.

\bibitem[Oszmaniec and Brod(2018)]{Oszmaniec18}
Micha{\l} Oszmaniec and Daniel~J Brod.
\newblock Classical simulation of photonic linear optics with lost particles.
\newblock \emph{New Journal of Physics}, 20:\penalty0 092002, 2018.
\newblock \doi{10.1088/1367-2630/aadfa8}.

\bibitem[Perez-Garcia et~al.(2007)Perez-Garcia, Verstraete, Wolf, and
  Cirac]{PeVeWoCi07}
D.~Perez-Garcia, F.~Verstraete, M.~M. Wolf, and J.~I. Cirac.
\newblock Matrix product state representations.
\newblock \emph{Quantum Info. Comput.}, 7:\penalty0 401--430, 2007.

\bibitem[Prevedel et~al.(2007)Prevedel, Walther, Tiefenbacher, Böhi,
  Kaltenbaek, Jennewein, and Zeilinger]{Prev2007}
R.~Prevedel, P.~Walther, F.~Tiefenbacher, P.~Böhi, R.~Kaltenbaek,
  T.~Jennewein, and A.~Zeilinger.
\newblock High-speed linear optics quantum computing using active feed-forward.
\newblock \emph{Nature}, 445:\penalty0 65--69, 2007.
\newblock \doi{10.1038/nature05346}.

\bibitem[Rahimi-Keshari et~al.(2013)Rahimi-Keshari, Broome, Fickler, Fedrizzi,
  Ralph, and White]{Rahimi13}
S.~Rahimi-Keshari, M.~A. Broome, R.~Fickler, A.~Fedrizzi, T.~C. Ralph, and
  A.~G. White.
\newblock Direct characterization of linear-optical networks.
\newblock \emph{Opt. Express}, 21:\penalty0 13450--13458, 2013.
\newblock \doi{10.1364/OE.21.013450}.

\bibitem[Rahimi-Keshari et~al.(2015)Rahimi-Keshari, Lund, and
  Ralph]{Rahimi2015}
S.~Rahimi-Keshari, A.~P. Lund, and T.~C. Ralph.
\newblock What can quantum optics say about computational complexity theory?
\newblock \emph{Phys. Rev. Lett.}, 114:\penalty0 060501, 2015.
\newblock \doi{10.1103/PhysRevLett.114.060501}.

\bibitem[Rahimi-Keshari et~al.(2016)Rahimi-Keshari, Ralph, and Caves]{Kesh16}
S.~Rahimi-Keshari, T.~C. Ralph, and C.~M. Caves.
\newblock Sufficient conditions for efficient classical simulation of quantum
  optics.
\newblock \emph{Phys. Rev. X}, 6:\penalty0 021039, 2016.
\newblock \doi{10.1103/PhysRevX.6.021039}.

\bibitem[Reck et~al.(1994)Reck, Zeilinger, Bernstein, and Bertani]{Reck94}
M.~Reck, A.~Zeilinger, H.~J. Bernstein, and P.~Bertani.
\newblock Experimental realization of any discrete unitary operator.
\newblock \emph{Phys. Rev. Lett.}, 73:\penalty0 58--61, 1994.
\newblock \doi{10.1103/PhysRevLett.73.58}.

\bibitem[Renema et~al.(2018{\natexlab{a}})Renema, Shchesnovich, and
  Garcia-Patron]{Renema18}
J.~Renema, V.~Shchesnovich, and R.~Garcia-Patron.
\newblock Classical simulability of noisy boson sampling.
\newblock \emph{arXiv}, page 809.01953, 2018{\natexlab{a}}.
\newblock URL \url{https://arxiv.org/abs/1809.01953}.

\bibitem[Renema et~al.(2018{\natexlab{b}})Renema, Menssen, Clements, Triginer,
  Kolthammer, and Walmsley]{Renema2016}
J.~J. Renema, A.~Menssen, W.~R. Clements, G.~Triginer, W.~S. Kolthammer, and
  I.~A. Walmsley.
\newblock Efficient classical algorithm for boson sampling with partially
  distinguishable photons.
\newblock \emph{Phys. Rev. Lett.}, 120:\penalty0 220502, 2018{\natexlab{b}}.
\newblock \doi{10.1103/PhysRevLett.120.220502}.

\bibitem[Rudolph(2017)]{Rudol2016}
T.~Rudolph.
\newblock Why {I} am optimistic about the silicon-photonic route to quantum
  computing.
\newblock \emph{APL Photonics}, 2:\penalty0 030901, 2017.
\newblock \doi{10.1063/1.4976737}.

\bibitem[Spring et~al.(2013)Spring, Metcalf, Humphreys, Kolthammer, Jin,
  Barbieri, Datta, Thomas-Peter, Langford, Kundys, Gates, Smith, Smith, and
  Walmsley]{BSintegrated}
J.~B. Spring, B.~J. Metcalf, P.~C. Humphreys, W.~S. Kolthammer, X.-M. Jin, Ma.
  Barbieri, A.~Datta, N.~Thomas-Peter, N.~K. Langford, D.~Kundys, J.~C. Gates,
  B.~J. Smith, P.~G.~R. Smith, and I.~A. Walmsley.
\newblock Boson sampling on a photonic chip.
\newblock \emph{Science}, 339:\penalty0 798--801, 2013.
\newblock \doi{10.1126/science.1231692}.

\bibitem[Taballione et~al.(2018)Taballione, Wolterink, Lugani, Eckstein, Bell,
  Grootjans, Visscher, Renema, Geskus, Roeloffzen, Walmsley, Pinkse, and
  Boller]{Taballione2018}
C.~Taballione, T.~A.~W. Wolterink, J.~Lugani, A.~Eckstein, B.~A. Bell,
  R.~Grootjans, I.~Visscher, J.~J. Renema, D.~Geskus, C.~G.~H. Roeloffzen,
  I.~A. Walmsley, P.~W.~H. Pinkse, and K.-J. Boller.
\newblock 8x8 programmable quantum photonic processor based on silicon nitride
  waveguides.
\newblock In \emph{Frontiers in Optics / Laser Science}. Optical Society of
  America, 2018.
\newblock \doi{10.1364/FIO.2018.JTu3A.58}.

\bibitem[Temme and Wocjan(2012)]{Temme2012}
K.~Temme and P.~Wocjan.
\newblock Efficient computation of the permanent of block factorizable
  matrices.
\newblock \emph{arXiv}, page 1208.6589, 2012.
\newblock URL \url{https://arxiv.org/abs/1208.6589}.

\bibitem[Temme et~al.(2010)Temme, Kastoryano, Ruskai, Wolf, and
  Verstraete]{Temme10}
K.~Temme, M.~J. Kastoryano, M.~B. Ruskai, M.~M. Wolf, and F.~Verstraete.
\newblock The $\chi^2$-divergence and mixing times of quantum markov processes.
\newblock \emph{Journal of Mathematical Physics}, 51:\penalty0 122201, 2010.
\newblock \doi{10.1063/1.3511335}.

\bibitem[Tillmann et~al.(2013)Tillmann, Dakić, Heilmann, Nolte, Szameit, and
  Walther]{Tilmann2013}
M.~Tillmann, B.~Dakić, R.~Heilmann, S.~Nolte, A.~Szameit, and P.~Walther.
\newblock Experimental boson sampling.
\newblock \emph{Nat. Photonics}, 7:\penalty0 540, 2013.
\newblock \doi{10.1038/nphoton.2013.102}.

\bibitem[Veitch et~al.(2012)Veitch, Ferrie, Gross, and Emerson]{Veit12}
V.~Veitch, C.~Ferrie, D.~Gross, and J.~Emerson.
\newblock Negative quasi-probability as a resource for quantum computation.
\newblock \emph{New Journal of Physics}, 14:\penalty0 113011, 2012.
\newblock \doi{10.1088/1367-2630/14/11/113011}.

\bibitem[Verstraete et~al.(2004)Verstraete, Porras, and Cirac]{VePoCi04}
F.~Verstraete, D.~Porras, and J.~I. Cirac.
\newblock Density matrix renormalization group and periodic boundary
  conditions: A quantum information perspective.
\newblock \emph{Phys. Rev. Lett.}, 93:\penalty0 227205, 2004.
\newblock \doi{10.1103/PhysRevLett.93.227205}.

\bibitem[Vidal(2003)]{Vi03}
G.~Vidal.
\newblock Efficient classical simulation of slightly entangled quantum
  computations.
\newblock \emph{Phys. Rev. Lett.}, 91:\penalty0 147902, 2003.
\newblock \doi{10.1103/PhysRevLett.91.147902}.

\bibitem[Wang et~al.(2018)Wang, Li, Jiang, He, Li, Ding, Chen, Qin, Peng,
  Schneider, Kamp, Zhang, Li, You, Wang, Dowling, H\"ofling, Lu, and
  Pan]{Wang18}
H.~Wang, W.~Li, X.~Jiang, Y.-M. He, Y.-H. Li, X.~Ding, M.-C. Chen, J.~Qin,
  C.-Z. Peng, C.~Schneider, M.~Kamp, W.-J. Zhang, H.~Li, L.-X. You, Z.~Wang,
  J.~P. Dowling, S.~H\"ofling, C.-Y. Lu, and J.-W. Pan.
\newblock Toward scalable boson sampling with photon loss.
\newblock \emph{Phys. Rev. Lett.}, 120:\penalty0 230502, 2018.
\newblock \doi{10.1103/PhysRevLett.120.230502}.

\bibitem[White(1992)]{Wh92}
S.~R. White.
\newblock Density matrix formulation for quantum renormalization groups.
\newblock \emph{Phys. Rev. Lett.}, 69:\penalty0 2863--2866, 1992.
\newblock \doi{10.1103/PhysRevLett.69.2863}.

\bibitem[{Wu} and {Verdú}(2010)]{Wu2010}
Y.~{Wu} and S.~{Verdú}.
\newblock The impact of constellation cardinality on gaussian channel capacity.
\newblock In \emph{2010 48th Annual Allerton Conference on Communication,
  Control, and Computing (Allerton)}, pages 620--628, 2010.
\newblock \doi{10.1109/ALLERTON.2010.5706965}.

\end{thebibliography}
\end{document}